\newtheorem{thm}{Theorem}
\newtheorem{cor}[thm]{Corollary}
\newtheorem{lem}[thm]{Lemma}
\theoremstyle{definition}
\newtheorem{dft}{Definition}
\theoremstyle{remark}
\newtheorem{rem}{Remark}
\def\Square{\vbox{\hrule\hbox{\vrule height 1.8mm\hskip 1.8mm%
                         \vrule height 1.8mm}\hrule}}
\begin{document}
\begin{JGGarticle} 
        {Isometric Billiards in Ellipses and\\ Focal Billiards in Ellipsoids}
        {H.\ Stachel: Isometric Billiards in Ellipses and Focal Billiards in Ellipsoids}
        {Hellmuth Stachel}
        {\JGGaddress{Institute of Discrete Mathematics and Geometry\\
           Vienna University of Technology\\
           Wiedner Hauptstr.\ 8-10/113, A-1040 Wien, Austria\\
           email: {\tt stachel\at dmg.tuwien.ac.at}}
}           

\newcommand\pfad{} % {fig_bil3/} 
\newcommand\Lemref[1]{Lemma~\ref{#1}}
\newcommand\Thmref[1]{Theorem~\ref{#1}}
\newcommand\Figref[1]{Figure~\ref{#1}}
\newcommand\Corref[1]{Corollary~\ref{#1}}
\newcommand\Defref[1]{Definition~\ref{#1}}
\def\ol#1{\overline{#1}}
\def\wt#1{\widetilde{#1}}
\def\RR{{\mathbb R}}
\def\ZZ{{\mathbb Z}}
\def\NN{{\mathbb N}}
\newcommand\Vkt[1]{{\mathbf #1}}
\def\wkl{\,<\mskip-10mu)\mskip4mu}
\def\Frac#1#2{{\displaystyle\frac{#1}{#2}}}
\def\smFrac#1#2{\mbox{\small $\displaystyle\frac{#1}{#2}$ }}
\def\ssmFrac#1#2{\mbox{\footnotesize $\displaystyle \frac{#1}{#2}$}}
\def\Frac#1#2{{\displaystyle\frac{#1}{#2}}}
\def\smFrac#1#2{\mbox{\small $\displaystyle\frac{#1}{#2}$}}
\def\ssmFrac#1#2{\mbox{\footnotesize $\displaystyle \frac{#1}{#2}$}}
\def\Sum{\displaystyle\sum}
\newcommand\name[1]{{\sc #1}}
\newcommand\Ecal{\mathcal E}
\newcommand\Hcal{\mathcal H}
\def\strqu{^{\prime\mskip 2mu 2}}
\def\zwstrqu{^{\prime\prime\mskip 2mu 2}}
\def\iso{^\ast}
\def\zwi{\mskip 9mu}
\newcommand\SN{\mskip 2mu\mathrm{sn}\mskip 2mu}
\newcommand\Sn{\mskip 2mu\mathrm{sn}}
\newcommand\CN{\mskip 2mu\mathrm{cn}\mskip 2mu}
\newcommand\Cn{\mskip 2mu\mathrm{cn}}
\newcommand\DN{\mskip 2mu\mathrm{dn}\mskip 2mu}
\newcommand\Dn{\mskip 2mu\mathrm{dn}}
\newcommand\const{\mathrm{const.}}
% -----------------------------------------------------------------
\definecolor{blau}{cmyk}{1.00,0.30,0.00,0.20} 
\def\blue{\color{blau}}
\definecolor{rotcmyk}{cmyk}{0.00,1.00,1.00,0.00}
\def\red{\color{rotcmyk}}
\definecolor{hgrau}{cmyk}{0.00,0.00,0.00,0.20}
\def\hgrau{\color{hgrau}}
\definecolor{dgrau}{cmyk}{0.00,0.00,0.00,0.85}
\def\hgrau{\color{dgrau}}
\definecolor{gruen}{cmyk}{1.00,0.00,0.90,0.00} 
\def\green{\color{gruen}}
\definecolor{gelb}{cmyk}{0.00,0.00,1.00,0.00} % statt 0.50
\def\gelb{\color{gelb}}
\definecolor{gelbb}{cmyk}{0.00,0.00,0.30,0.00} % statt 0.50
\def\gelbb{\color{gelbb}}
\def\white{\color{white}}
% ------------------------------------------------------------------------
% \include{focal_anfg}
% \include{rest_focal} 
% ------------------------------------------------------------------------
\begin{JGGabstract}
Billiards in ellipses have a confocal ellipse or hyperbola as caustic.
The goal of this paper is to prove that for each billiard of one type there exists an isometric counterpart of the other type.
Isometry means here that the lengths of corresponding sides are equal.
The transition between these two isometric billiard can be carried out continuosly via isometric focal billiards in a fixed ellipsoid.
The extended sides of these particular billiards in an ellipsoid are focal axes, i.e., generators of confocal hyperboloids. 
\\
This transition enables to transfer properties of planar billiards to focal billiards, in particular billiard motions and canonical parametrizations.
% Hence, for the vertices of focal billiards the motion corresponds to a shift of parameters.
A periodic planar billiard and its associated Poncelet grid give rise to periodic focal billiards and spatial Poncelet grids. 
If the sides of a focal billiard are materialized as thin rods with spherical joints at the vertices and other crossing points between different sides, then we obtain Henrici's hyperboloid, which is flexible between the two planar limits. 
\\[1mm]
{\em Key Words:} Billiard, billiard in ellipse, caustic, Poncelet grid, confocal conics, confocal quadrics, focal axis, focal billiard in ellipsoids, billiard motion, canonical parametrization
\\[1mm]
{\em MSC 2010:} 51M04, 53A05, 53A17
\end{JGGabstract}
%
% ------------------------------------------------------------------
\section{Introduction}
%       -------------- 
A billiard is the trajectory of a mass point in a domain with ideal physical reflections in the boundary.
Already for two centuries, billiards in ellipses have attracted the attention of mathematicians, beginning with J.-V.\ Poncelet and A.\ Cayley.
In 2005 S.\ Tabachnikov published a book on billiards as integrable systems \cite{Tabach}.
In several publications, V.\ Dragovi\'c and M.\ Radnovi\'c studied billiards in higher dimensions from the viewpoint of dynamical systems. 
Their survey \cite{DR_russ} as well as the introduction in \cite{Tabach} provide insights into the rich history of this topic. 

It is wellknown that the sides of a billiard in an ellipse $e$ are tangent to a confocal conic called {\em caustic}.
If this is an ellipse, we speak of an {\em elliptic} billiard.
Otherwise the billiard is called {\em hyperbolic}. 
Computer animations of billiards in ellipses, which were carried out recently by D.\ Reznik, stimulated a new vivid interest on this well studied topic, where algebraic and analytic methods are meeting.
Periodic billiards in an ellipse $e$ with a fixed caustic $c$ provide the standard example of a Poncelet porism, because the periodicity of a billiard in $e$ with caustic $c$ is independent of the choice of the initial vertex. 
Hence, the continuous variation of the initial vertex defines a socalled {\em billiard motion}, and this was the starting point for Reznik's investigations.
He published a list of more than eighty invariants of periodic billiards in ellipses under billiard motions \cite{80} and, with the coauthors R.\ Garcia, J.\ Koiller and M.\ Helman, also proofs for dozens of invariants.
However, also other authors contributed proofs and detected more invariants, among them A.\ Akopyan, M.\ Bialy, A.\ Chavez-Caliz, R.\ Schwartz, S.\ Tabachnikov (see, e.g., \cite{Ako-Tab, Bialy-Tab, Chavez}).
A long list of further references can be found in \cite{80}.

In the literature on billiards in ellipses, the case of elliptic billiards is more intensively studied than that of hyperbolic billiards.
The goal of this paper is to prove an isometry between elliptic and hyperbolic billiards.
It is shown that there is even a continuous transition between these two types via particular billiards in ellipsoids, called {\em focal billiards}.
This transition preserves the lengths of the billiards' sides and is related to Henrici's flexible hyperboloid, a wellknown example of an overconstrained mechanism.

It needs to be mentioned that $n$-dimensional versions of billiards in quadrics for $n\ge 3$ ellipsoid were already studied by V.\ Dragovi\'c and M.\ Radnovi\'c  within the framework of dynamical systems (see, e.g., \cite{DR_2006,DR_russ}). 
In \cite[Example~2-17]{DR_russ}, focal billiards in ellipsoids are explicitely mentioned, but without any further details.

The underlying paper is organized as follows.
The coming section recalls properties of billiards in ellipses and their associated Poncelet grids.
After a brief comparison of elliptic and hyperbolic versions, Section~3 addresses the isometry between these two types.
The proof is postponed to Section~4 which shows the continuous transition from one type to the other via isometric focal billiards in an ellipsoid.
Finally, in Section~5, the transition is used to transfer results concerning billiard motions and invariants from the plane to three dimensions.
Thus we obtain, in terms of Jacobian elliptic functions, a mapping that sends a square grid together with diagonals to the Poncelet grid of a focal billiard on a one-sheeted hyperboloid.

% ----------------------------------------------------------------------------
\section{The geometry of billiards in ellipses}
%       --------------------------------------- 
A family of {\em confocal} central conics is given by
\begin{equation}\label{eq:confocal}
  \frac{x^2}{a^2+k} + \frac{y^2}{b^2+k} = 1, \ \mbox{where} \
    k \in \RR \setminus \{-a^2, -b^2\}
\end{equation}
serves as a parameter in the family.
All these conics share the focal points $F_{1,2} = (\pm d,0)$ with $d^2:= a^2-b^2$.

The confocal family sends through each point $P$ outside the common axes of symmetry two orthogonally intersecting conics, one ellipse and one hyperbola \cite[p.~38]{Conics}.
The parameters $(k_e, k_h)$ of these two conics define the {\em elliptic coordinates} of $P$ with
\[  -a^2 < k_h < -b^2 < k_e\,.
\]
If $(x,y)$ are the cartesian coordinates of $P$, then $(k_e,k_h)$ are the roots of the quadratic equation
\begin{equation}\label{eq:cart_in_ell}
  k^2 + (a^2 + b^2 - x^2 - y^2)k + (a^2 b^2 - b^2 x^2 - a^2 y^2) = 0,
\end{equation}
while conversely
\begin{equation}\label{eq:ell_in_cart}
   x^2 = \frac{(a^2 + k_e)(a^2 + k_h)}{d^2}\,, \quad
    y^2 = -\frac{(b^2 + k_e)(b^2 + k_h)}{d^2}\,.
\end{equation}

Let $(a,b) = (a_c,b_c)$ be the semiaxes of the ellipse $c$ with $k = 0$.
Then, for points $P$ on a confocal ellipse $e$ with semiaxes $(a_e,b_e)$ and $k = k_e > 0$, i.e., exterior to $c$, the standard parametrization yields 
\begin{equation}\label{eq:P_coord}
 \begin{array}{c}
   P = (a_e\cos t,\,b_e\sin t), \ 0 \le t < 2\pi, \ \mbox{with} \ 
   a_e^2 = a_c^2 + k_e, \ b_e^2 = b_c^2 + k_e\,. 
 \end{array}
\end{equation}
For the elliptic coordinates $(k_e,k_h)$ of $P$ follows from \eqref{eq:cart_in_ell} that 
\[ k_e + k_h = a_e^2\cos^2 t + b_e^2\sin^2 t - a_c^2 - b_c^2.
\]
After introducing tangent vectors of $e$ and $c$, namely
\def\arraycolsep{0.6mm}
\begin{equation}\label{eq:te_und_tc}
  \Vkt t_e(t) := (-a_e\sin t,\, b_e\cos t) \zwi\mbox{and}\zwi
  \Vkt t_c(t) := (-a_c\sin t,\, b_c\cos t),
\end{equation}
where $\Vert \Vkt t_e\Vert^2 = \Vert \Vkt t_c\Vert^2 + k_e\,$, we obtain
\begin{equation}\label{eq:k_h}
   k_h = k_h(t) = -(a_c^2\sin^2 t + b_c^2\cos^2 t) = -\Vert\Vkt t_c(t)\Vert^2   
   = -\Vert\Vkt t_e(t)\Vert^2 + k_e
\end{equation}
and
\begin{equation}\label{eq:k_e minus k_h}
  \Vert\Vkt t_e(t)\Vert^2 = k_e - k_h(t)\,.
\end{equation}  
Note that points on the confocal ellipses $e$ and $c$ with the same parameter $t$ have the same coordinate $k_h$.
Consequently, they belong to the same confocal hyperbola (\Figref{fig:Poncelet_grid_n8}).
Conversely, points of $e$ or $c$ on this hyperbola have a parameter out of $\{t, -t, \pi+t, \pi-t\}$.

\medskip
Let $\dots P_1P_2P_3\dots$ be a billiard in the ellipse $e$. 
Then the extended sides intersect at points 
\begin{equation}\label{eq:S_i^j}
  S_i^{(j)}:= \left\{ \begin{array}{rl}
  [P_{i-k-1},P_{i-k}]\cap[P_{i+k},P_{i+k+1}] &\zwi\mbox{for} \ j = 2k,
   \\[0.6mm] 
  [P_{i-k},P_{i-k+1}]\cap[P_{i+k},P_{i+k+1}] &\zwi\mbox{for} \ j = 2k-1,
 \end{array} \right.  
\end{equation}
where $i = \dots,1,2,3,\dots$ and $j= 1,2,\dots$.
These points are distributed on particular confocal conics:
For fixed $j$, there are conics $e^{(j)}$ passing through the points $S_i^{(j)}$.
On the other hand, the points $S_i^{(2)}$, $S_i^{(4)},\dots$ are located on the confocal hyperbola through $P_i$, while $S_i^{(1)}$, $S_i^{(3)}, \dots$ belong to a confocal conic other than $c$ through the contact point between the side line $[P_i,P_{i+1}]$ and the caustic $c$.
This configuration is called the associated {\em Poncelet grid}.

For an $N$-periodic billiard, the set of points $S_i^{(j)}$ is finite.
There are confocal conics $e^{(j)}$ for $j = 1,2,\dots,p$ with $p = \left[\frac{N-3}2\right]$.
If the billiard is centrally symmetric, then the points $S_i^{(j)}$ for $j = \frac{N-2}2$ are at infinity.
Below we summarize some properties of the two types of billiards.
For further details see, e.g., \cite{Sta_I}.

% ..........................................................................
\subsection{Elliptical billiards}
%          ...................... 

 \begin{figure}[htb] % Fig.1
  \centering
  \def\sz{\small} 
  \psfrag{P1}[lc]{\contourlength{1.2pt}\contour{white}{\sz\red $P_1$}}
  \psfrag{P2}[lb]{\contourlength{1.2pt}\contour{white}{\sz\red $P_2$}}
  \psfrag{P3}[rb]{\contourlength{1.2pt}\contour{white}{\sz\red $P_3$}}
  \psfrag{P4}[rc]{\contourlength{1.2pt}\contour{white}{\sz\red $P_4$}}
  \psfrag{P5}[rt]{\contourlength{1.2pt}\contour{white}{\sz\red $P_5$}}
  \psfrag{P6}[rt]{\contourlength{1.2pt}\contour{white}{\sz\red $P_6$}}
  \psfrag{P7}[ct]{\contourlength{1.2pt}\contour{white}{\sz\red $P_7$}}
  \psfrag{P8}[lt]{\contourlength{1.2pt}\contour{white}{\sz\red $P_8$}}
  \psfrag{P1'}[lb]{\contournumber{32}\contourlength{1.4pt}\contour{white}{\sz\red $P_1'$}}
  \psfrag{P2'}[lb]{\sz\red $P_2'$}
  \psfrag{P3'}[cb]{\contourlength{1.2pt}\contour{white}{\sz\red $P_3'$}}
  \psfrag{P4'}[rc]{} % {\contourlength{1.2pt}\contour{white}{\sz\red $P_4'$}}
  \psfrag{P5'}[rt]{} % {\contourlength{1.2pt}\contour{white}{\sz\red $P_5'$}}
  \psfrag{P6'}[lt]{} % {\contourlength{1.2pt}\contour{white}{\sz\red $P_6'$}}
  \psfrag{P7'}[lt]{} % {\contourlength{1.2pt}\contour{white}{\sz\red $P_7'$}}
  \psfrag{P8'}[lc]{\contourlength{1.2pt}\contour{white}{\sz\red $P_8'$}}
  \psfrag{Q1}[rt]{\contourlength{1.2pt}\contour{gelbb}{\sz\blue $Q_1$}}
  \psfrag{Q2}[ct]{\contourlength{1.4pt}\contour{gelbb}{\sz\blue $Q_2$}}
  \psfrag{Q3}[lt]{\contourlength{1.2pt}\contour{gelbb}{\sz\blue $Q_3$}}
  \psfrag{Q4}[lb]{\contourlength{1.2pt}\contour{gelbb}{\sz\blue $Q_4$}}
  \psfrag{Q5}[lb]{\contourlength{1.2pt}\contour{gelbb}{\sz\blue $Q_5$}}
  \psfrag{Q6}[lb]{\contourlength{1.2pt}\contour{gelbb}{\sz\blue $Q_6$}}
  \psfrag{Q7}[rb]{\contourlength{1.4pt}\contour{gelbb}{\sz\blue $Q_7$}}
  \psfrag{Q8}[rc]{\contourlength{1.2pt}\contour{gelbb}{\sz\blue $Q_8$}}
  \psfrag{S11}[lb]{\contourlength{1.2pt}\contour{white}{\sz\green $S_1^{(1)}$}}
  \psfrag{S21}[lb]{\contourlength{1.2pt}\contour{white}{\sz\green $S_2^{(1)}$}}
  \psfrag{S31}[rb]{\contourlength{1.2pt}\contour{white}{\sz\green $S_3^{(1)}$}}
  \psfrag{S41}[rt]{\contourlength{1.2pt}\contour{white}{\sz\green $S_4^{(1)}$}}
  \psfrag{S51}[rt]{\contourlength{1.2pt}\contour{white}{\sz\green $S_5^{(1)}$}}
  \psfrag{S61}[lt]{\contourlength{1.2pt}\contour{white}{\sz\green $S_6^{(1)}$}}
  \psfrag{S71}[lt]{\contourlength{1.2pt}\contour{white}{\sz\green $S_7^{(1)}$}}
  \psfrag{S81}[lc]{\contourlength{1.4pt}\contour{white}{\sz\green $S_8^{(1)}$}}
  \psfrag{S12}[lc]{\contourlength{1.2pt}\contour{white}{\sz\green $S_1^{(2)}$}}
  \psfrag{S82}[lb]{\contourlength{1.2pt}\contour{white}{\sz\green $S_8^{(2)}$}}
  \psfrag{c}[rt]{\blue $\boldsymbol{c}$}
  \psfrag{e}[lb]{\red $\boldsymbol{e}$}
  \psfrag{e1}[lb]{\contourlength{1.2pt}\contour{white}{\green $\boldsymbol{e}^{(1)}$}}
  \psfrag{e2}[rt]{\green $\boldsymbol{e}^{(2)}$}
  \includegraphics[width=130mm]{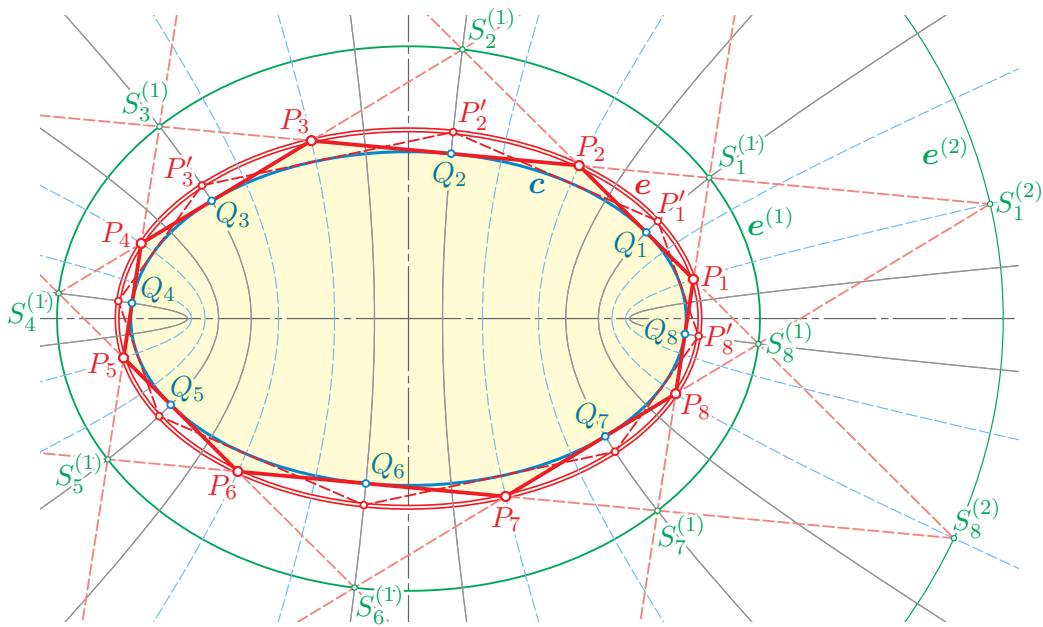} % biliard8.pas
  \caption{Periodic billiard $P_1P_2\dots P_8$ with turning number $\tau = 1$ together with the conjugate billiard $P_1'P_2'\dots P_8'$ and the ellipses $e^{(1)}$ and $e^{(2)}$.
The associated billiard in $e^{(1)}$ splits into two conjugate quadrangles; the billiard in $e^{(2)}$ has $\tau = 3$.}
  \label{fig:Poncelet_grid_n8}
\end{figure}

Let $\dots P_1 P_2 P_3\dots$ be a billiard in $e$ with the ellipse $c$ as caustic (\Figref{fig:Poncelet_grid_n8}).
The points $S_i^{(1)}, S_i^{(3)}, \dots$ are located on the confocal hyperbola through the contact point $Q_i$ between $c$ and the side $P_iP_{i+1}$.
On the other hand, $S_i^{(2)}, S_i^{(4)}, \dots$ are located on the confocal hyperbola through $P_i$.
For fixed $j$, the points $S_i^{(j)}$, $i = \dots,1,2,\dots$, are either at infinity or points of a confocal ellipse $e^{(j)}$.
The ellipses $e^{(j)}$ are independent of the position of the initial vertex $P_1\in e$, hence motion invariant. 

Based on the standard parametrizations of $e$ and $c$, let $t_i$ denote the parameter of $P_i$ and $t_i'$ that of $Q_i$.
Then for $N$-periodic elliptic billiards the sequence of parameters $t_1, t_1', t_2, t_2',\dots, t_N'$ is cyclic.
Exchanging $t_i$ with $t_i'$ is equivalent to the exchange of the billiard $P_1 P_2 \dots P_N$ with the {\em conjugate} billiard $P_i' P_2' \dots P_N'$ (see \cite[Definition~3.10]{Sta_I}).
The turning number $\tau$ of any periodic elliptic billiard counts the surroundings of $e$ within one period.
If the original billiard has $\tau=1$, then the extended sides in $e^{(1)}$ determine a billiard with turning number $\tau=2$, in $e^{(2)}$ with $\tau=3$, and so on.

For even $N$, opposite vertices of the elliptic billiard belong to the same confocal hyperbola.
For odd $N$, each vertex and the contact point of $c$ with the opposite side belong to the same confocal hyperbola.
As a consequence, for even $N$ and odd $\tau$, the elliptic billiard is centrally symmetric (Figures~\ref{fig:Poncelet_grid_n8} or \ref{fig:isometric2}, bottom).
 For odd $N$ and odd $\tau$, the billiard is centrally symmetric to the conjugate billiard. 
 If $N$ is odd and $\tau$ even, then the conjugate billiard coincides with the original one.
 
% ..........................................................................
\subsection{Hyperbolic billiards}
%          ...................... 

As illustrated in \Figref{fig:hyp_Kaust}, billiards in an ellipse $e$ with a confocal hyperbola $c$ as caustic are zig-zags between an upper and lower subarc of $e$.
However, they differ from elliptic billiard also in other respects.

\begin{figure}[t] % Fig.2
  \centering 
  \def\sz{\small}
  \def\ssz{\footnotesize}
  \psfrag{Q1}[lc]{\contourlength{1.5pt}\contour{white}{\ssz\blue $T_1$}}
%   \psfrag{Q2}[lb]{\contourlength{1.2pt}\contour{white}{\ssz\blue $T_2$}}
%   \psfrag{Q3}[lb]{\contourlength{1.2pt}\contour{white}{\ssz\blue $T_3$}}
  \psfrag{Q4}[rc]{\contourlength{1.2pt}\contour{white}{\ssz\blue $T_4$}}
  \psfrag{Q5}[rt]{\contourlength{1.2pt}\contour{white}{\ssz\blue $T_5$}}
  \psfrag{Q6}[rb]{\contourlength{1.2pt}\contour{white}{\ssz\blue $T_6$}}
  \psfrag{Q7}[rc]{\contourlength{1.5pt}\contour{white}{\ssz\blue $T_7$}}
%   \psfrag{Q8}[lb]{\contourlength{1.2pt}\contour{white}{\ssz\blue $T_8$}}
%   \psfrag{Q9}[lb]{\contourlength{1.2pt}\contour{white}{\ssz\blue $T_9$}}
  \psfrag{Q10}[lc]{\contourlength{1.2pt}\contour{white}{\ssz\blue $T_{10}$}}
  \psfrag{Q11}[lt]{\contourlength{1.2pt}\contour{white}{\ssz\blue $T_{11}$}}
  \psfrag{Q12}[lb]{\contourlength{1.2pt}\contour{white}{\ssz\blue $T_{12}$}}
  \psfrag{P1}[lb]{\contourlength{1.2pt}\contour{white}{\sz\red $P_1$}}
  \psfrag{P2}[lt]{\contourlength{1.2pt}\contour{white}{\sz\red $P_2$}}
  \psfrag{P3}[lb]{\contourlength{1.2pt}\contour{white}{\sz\red $P_3$}}
  \psfrag{P4}[lt]{\contourlength{1.2pt}\contour{white}{\sz\red $P_4$}}
  \psfrag{P5}[lb]{\contourlength{1.2pt}\contour{white}{\sz\red $P_5$}}
  \psfrag{P6}[rt]{\contourlength{1.2pt}\contour{white}{\sz\red $P_6$}}
  \psfrag{P7}[rb]{\contourlength{1.2pt}\contour{white}{\sz\red $P_7$}}
  \psfrag{P8}[rt]{\contourlength{1.2pt}\contour{white}{\sz\red $P_8$}}
  \psfrag{P9}[rb]{\contourlength{1.2pt}\contour{white}{\sz\red $P_9$}}
  \psfrag{P10}[rt]{\contourlength{1.2pt}\contour{white}{\sz\red $P_{10}$}}
  \psfrag{P11}[rb]{\contourlength{1.2pt}\contour{white}{\sz\red $P_{11}$}}
  \psfrag{P12}[lt]{\contourlength{1.2pt}\contour{white}{\sz\red $P_{12}$}}
  \psfrag{S1^1}[rt]{\contourlength{1.2pt}\contour{white}{\ssz\green $S_2^{(1)}$}}
  \psfrag{S2^1}[rt]{\contourlength{1.2pt}\contour{white}{\ssz\green $S_3^{(1)}$}}
  \psfrag{S3^1}[rt]{\contourlength{1.2pt}\contour{white}{\ssz\green $S_4^{(1)}$}}
  \psfrag{S4^1}[lb]{\contourlength{1.2pt}\contour{white}{\ssz\green $S_5^{(1)}$}}
  \psfrag{S5^1}[lc]{\contourlength{1.2pt}\contour{white}{\ssz\green $S_6^{(1)}$}}
  \psfrag{S6^1}[lb]{\contourlength{1.2pt}\contour{white}{\ssz\green $S_7^{(1)}$}}
  \psfrag{S7^1}[rt]{\contourlength{1.2pt}\contour{white}{\ssz\green $S_8^{(1)}$}}
  \psfrag{S8^1}[rt]{\contourlength{1.2pt}\contour{white}{\ssz\green $S_9^{(1)}$}}
  \psfrag{S9^1}[rt]{\contourlength{1.2pt}\contour{white}{\ssz\green $S_{10}^{(1)}$}}
  \psfrag{S10^1}[rc]{\contourlength{1.2pt}\contour{white}{\ssz\green $S_{11}^{(1)}$}}
  \psfrag{S11^1}[rt]{\contourlength{1.2pt}\contour{white}{\ssz\green $S_{12}^{(1)}$}}
  \psfrag{S12^1}[lc]{\contourlength{1.2pt}\contour{white}{\ssz\green $S_1^{(1)}$}}
  \psfrag{S1^2}[lb]{\contourlength{1.2pt}\contour{white}{\ssz\green $S_3^{(2)}$}}
  \psfrag{S2^2}[lt]{\contourlength{1.2pt}\contour{white}{\ssz\green $S_4^{(2)}$}}
  \psfrag{S3^2}[lb]{\contourlength{1.2pt}\contour{white}{\ssz\green $S_5^{(2)}$}}
  \psfrag{S4^2}[rc]{\contourlength{1.2pt}\contour{white}{\ssz\green $S_6^{(2)}$}}
  \psfrag{S5^2}[rc]{\contourlength{1.2pt}\contour{white}{\ssz\green $S_7^{(2)}$}}
  \psfrag{S6^2}[rb]{\contourlength{1.2pt}\contour{white}{\ssz\green $S_8^{(2)}$}}
  \psfrag{S7^2}[rb]{\contourlength{1.2pt}\contour{white}{\ssz\green $S_9^{(2)}$}}
  \psfrag{S8^2}[rt]{\contourlength{1.2pt}\contour{white}{\ssz\green $S_{10}^{(2)}$}}
  \psfrag{S9^2}[rc]{\contourlength{1.2pt}\contour{white}{\ssz\green $S_{11}^{(2)}$}}
  \psfrag{S10^2}[lb]{\contourlength{1.2pt}\contour{white}{\ssz\green $S_{12}^{(2)}$}}
  \psfrag{S11^2}[lt]{\contourlength{1.2pt}\contour{white}{\ssz\green $S_1^{(2)}$}}
  \psfrag{S12^2}[rt]{\contourlength{1.2pt}\contour{white}{\ssz\green $S_2^{(2)}$}}
  \psfrag{S1^3}[rt]{\contourlength{1.2pt}\contour{white}{\ssz\green $S_3^{(3)}$}}
  \psfrag{S2^3}[rt]{\contourlength{1.2pt}\contour{white}{\ssz\green $S_4^{(3)}$}}
  \psfrag{S3^3}[ct]{\contourlength{1.2pt}\contour{white}{\ssz\green $S_5^{(3)}$}}
  \psfrag{S4^3}[lc]{\contourlength{1.2pt}\contour{white}{\ssz\green $S_6^{(3)}$}}
  \psfrag{S5^3}[lt]{\contourlength{1.2pt}\contour{white}{\ssz\green $S_7^{(3)}$}}
  \psfrag{S6^3}[rt]{\contourlength{1.2pt}\contour{white}{\ssz\green $S_8^{(3)}$}}
  \psfrag{S7^3}[rt]{\contourlength{1.2pt}\contour{white}{\ssz\green $S_9^{(3)}$}}
  \psfrag{S8^3}[rt]{\contourlength{1.2pt}\contour{white}{\ssz\green $S_{10}^{(3)}$}}
  \psfrag{S9^3}[ct]{\contourlength{1.2pt}\contour{white}{\ssz\green $S_{11}^{(3)}$}}
  \psfrag{S10^3}[rc]{\contourlength{1.2pt}\contour{white}{\ssz\green $S_{12}^{(3)}$}}
  \psfrag{S11^3}[rt]{\contourlength{1.2pt}\contour{white}{\ssz\green $S_1^{(3)}$}}
  \psfrag{S12^3}[lb]{\contourlength{1.2pt}\contour{white}{\ssz\green $S_2^{(3)}$}}
%   \psfrag{S1^4}[rt]{\contourlength{1.2pt}\contour{white}{\ssz\green $S_4^{(4)}$}}
%   \psfrag{S2^4}[rt]{\contourlength{1.2pt}\contour{white}{\ssz\green $S_5^{(4)}$}}
%   \psfrag{S3^4}[rt]{\contourlength{1.2pt}\contour{white}{\ssz\green $S_6^{(4)}$}}
%   \psfrag{S4^4}[rt]{\contourlength{1.2pt}\contour{white}{\ssz\green $S_7^{(4)}$}}
%   \psfrag{S5^4}[rt]{\contourlength{1.2pt}\contour{white}{\ssz\green $S_8^{(4)}$}}
%   \psfrag{S6^4}[rt]{\contourlength{1.2pt}\contour{white}{\ssz\green $S_9^{(4)}$}}
%   \psfrag{S7^4}[rt]{\contourlength{1.2pt}\contour{white}{\ssz\green $S_{10}^{(4)}$}}
%   \psfrag{S8^4}[rt]{\contourlength{1.2pt}\contour{white}{\ssz\green $S_{11}^{(4)}$}}
%   \psfrag{S9^4}[rt]{\contourlength{1.2pt}\contour{white}{\ssz\green $S_{12}^{(4)}$}}
%   \psfrag{S10^4}[rt]{\contourlength{1.2pt}\contour{white}{\ssz\green $S_1^{(4)}$}}
%   \psfrag{S11^4}[rt]{\contourlength{1.2pt}\contour{white}{\ssz\green $S_2^{(4)}$}}
%   \psfrag{S12^4}[rt]{\contourlength{1.2pt}\contour{white}{\ssz\green $S_3^{(4)}$}}
  \psfrag{S1^5}[rt]{\contourlength{1.2pt}\contour{white}{\ssz\green $S_4^{(5)}$}}
  \psfrag{S2^5}[lb]{\contourlength{1.5pt}\contour{white}{\ssz\green $S_5^{(5)}$}}
  \psfrag{S3^5}[rt]{\contourlength{1.2pt}\contour{white}{\ssz\green $S_6^{(5)}$}}
  \psfrag{S4^5}[lb]{\contourlength{1.2pt}\contour{white}{\ssz\green $S_7^{(5)}$}}
  \psfrag{S5^5}[rt]{\contourlength{1.2pt}\contour{white}{\ssz\green $S_8^{(5)}$}}
  \psfrag{S6^5}[rt]{\contourlength{1.2pt}\contour{white}{\ssz\green $S_9^{(5)}$}}
  \psfrag{e}[lb]{\sz\red $\boldsymbol{e}$}
  \psfrag{c}[lb]{\sz\blue $\boldsymbol{c}$}
  \psfrag{e1}[lt]{\contourlength{1.5pt}\contour{white}{\sz $\boldsymbol{e}^{(1)}$}}
  \psfrag{e2}[lb]{\sz $\boldsymbol{e}^{(2)}$} % \green
  \psfrag{e3}[lt]{\contourlength{1.2pt}\contour{white}{\sz $\boldsymbol{e}^{(3)}$}}
  \psfrag{e5}[rc]{\contourlength{1.2pt}\contour{white}{\sz $\boldsymbol{e}^{(5)}$}}
 \includegraphics[width=150mm]{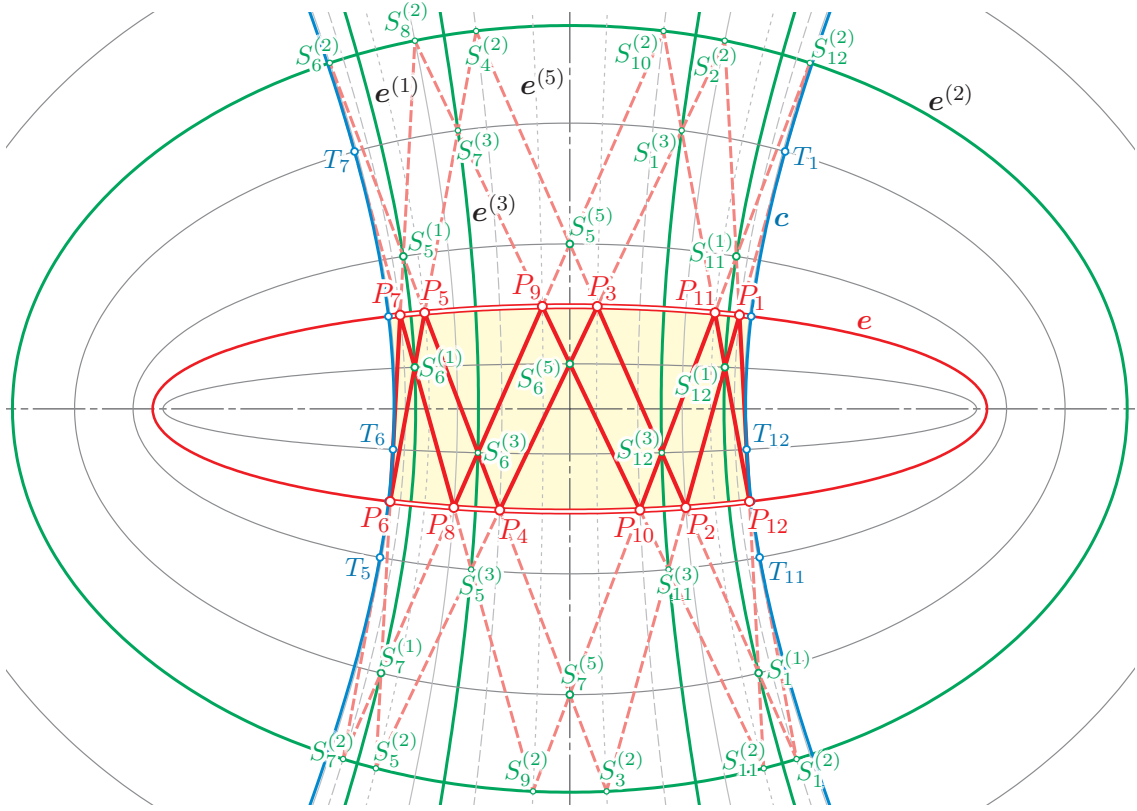}
  \caption{Periodic billiard $P_1P_2\dots P_{12}$ in the ellipse $e$ with the hyperbola $c$ as caustic, together with the hyperbolas $e^{(1)}$, $e^{(3)}$ and the ellipse $e^{(2)}$.
  The associated billiard in $e^{(2)}$ splits into four quadrangles.}
  \label{fig:hyp_Kaust}
\end{figure}

Let $T_i$ be the contact point between $c$ and the side line $[P_i,P_{i+1}]$.
Then the points $S_i^{(1)}, S_i^{(3)}, \dots$ are located on the confocal ellipse through $T_i$.
On the other hand, $S_i^{(2)}, S_i^{(4)}, \dots$ are located on the confocal hyperbola through $P_i$.
For odd $j$, the points $S_i^{(j)}$, $i = \dots,1,2,\dots$, belong to a confocal hyperbola $e^{(j)}$ or to the secondary axis of $e$, or they are at infinity.
If $S_i^{(j)}$ is finite, then the tangent to $e^{(j)}$ bisects the angle between the sides lines through $S_i^{(j)}$.
For even $j$, the $S_i^{(j)}$ are vertices of a billiard inscribed in an ellipse $e^{(j)}$ (note $e^{(2)}$ in \Figref{fig:hyp_Kaust}) or at infinity.
All conics $e^{(j)}$ are motion invariant. 

Now we concentrate on $N$-periodic hyperbolic billiards $P_1 P_2\dots P_N$.
Since they oscillate between the upper and lower section of the circumscribed ellipse, $N$ must be even.
If $\ol P_1 \ol P_2\dots \ol P_N$ denotes the billiard's image under the reflection in the principal axis, then the sequence of parameters $t_1, \ol t_2, t_3, \ol t_4,\dots,\ol t_N$ of the vertices $P_1, \ol P_2, P_3, \ol P_4,\dots, \ol P_N$ is cyclic.
Also for hyperbolic billiards, it is possible to define the turning number $\tau$.
It counts how often the points $P_1, \ol P_2, P_3,\dots, \ol P_N$ run to and fro along one component of $e$.
% This becomes obvious when the billiard is seen as the limit of a focal billiard in the sense of \Thmref{thm:isometric2} 
According to \cite[Definition~3.13]{Sta_I}, there exist conjugate billiards also in this case.
 
For $N\equiv 0\pmod 4$, the hyperbolic billiards are symmetric with respect to (w.r.t.\ in brief) the secondary axis of $e$ and $c$ (\Figref{fig:hyp_Kaust}). 
For $N\equiv 2\pmod 4$ and odd turning number $\tau$, the billiards are centrally symmetric (\Figref{fig:isometric2}, top).
For even $\tau$, each hyperbolic billiard is symmetric w.r.t.\ the principal axis of $e$ and $c$ (note also \Lemref{lem:spat_bil_symm}).  
If the initial point $P_1$ is chosen at any point of intersection between $e$ and the hyperbola $c$, then the billiard is twofold covered.
The first side $P_1P_2$ is of course tangent to $c$ at $P_1$. % (\Figref{fig:twofold}). 

% ...........................................................................
\subsection{Isometry between elliptic and hyperbolic billiards}
%          ....................................................

\begin{figure}[hbt] % Fig.3
  \centering
  \def\sz{\small} 
  \psfrag{01}[lb]{\contourlength{1.2pt}\contour{white}{\sz\red $P_1''$}}
  \psfrag{02}[rc]{\contourlength{1.6pt}\contour{white}{\sz\red $P_2''$}}
  \psfrag{03}[rb]{\contourlength{1.2pt}\contour{white}{\sz\red $P_3''$}}
  \psfrag{04}[ct]{\contourlength{1.2pt}\contour{white}{\sz\red $P_4''$}}
  \psfrag{05}[cb]{\contourlength{1.2pt}\contour{white}{\sz\red $P_5''$}}
  \psfrag{06}[rt]{\contourlength{1.2pt}\contour{white}{\sz\red $P_6''$}}
  \psfrag{07}[rc]{\contourlength{1.6pt}\contour{white}{\sz\red $P_7''$}}
  \psfrag{08}[rt]{\contourlength{1.2pt}\contour{white}{\sz\red $P_8''$}}
  \psfrag{09}[lc]{\contourlength{1.6pt}\contour{white}{\sz\red $P_9''$}}
  \psfrag{10}[ct]{\contourlength{1.2pt}\contour{white}{\sz\red $P_{10}''$}}
  \psfrag{11}[rb]{\contourlength{1.2pt}\contour{white}{\sz\red $P_{11}''$}}
  \psfrag{12}[rt]{\contourlength{1.2pt}\contour{white}{\sz\red $P_{12}''$}}
  \psfrag{13}[cb]{\contourlength{1.2pt}\contour{white}{\sz\red $P_{13}''$}}
  \psfrag{14}[lc]{\contourlength{1.6pt}\contour{white}{\sz\red $P_{14}''$}}
  \psfrag{Q1}[lc]{\contourlength{1.5pt}\contour{white}{\sz\blue $Q_1''$}}
  \psfrag{T1}[rb]{\contourlength{1.2pt}\contour{white}{\sz\blue $T_1''$}}
  \psfrag{S32}[lt]{\contourlength{1.2pt}\contour{white}{\sz\blue $S_3^{(2)\mskip 1mu\prime\prime}$}}
  \psfrag{S54}[lc]{\contourlength{1.2pt}\contour{white}{\sz\blue $S_5^{(4)\mskip 1mu\prime\prime}$}}
  \psfrag{S64}[ct]{\contourlength{1.2pt}\contour{white}{\sz\blue $S_6^{(4)\mskip 1mu\prime\prime}$}}
  \psfrag{e"}[lt]{\contourlength{1.2pt}\contour{white}{\sz\red $\boldsymbol{e}''$}}
  \psfrag{c"}[lb]{\contourlength{1.2pt}\contour{white}{\sz\blue $\boldsymbol{c}''$}}
  \includegraphics[width=140mm]{\pfad bil9_143_allg} 

\bigskip
  \psfrag{01}[lb]{\contourlength{1.2pt}\contour{white}{\sz\red $P_1'$}}
  \psfrag{02}[rc]{\contourlength{1.2pt}\contour{white}{\sz\red $P_2'$}}
  \psfrag{03}[rt]{\contourlength{1.2pt}\contour{white}{\sz\red $P_3'$}}
  \psfrag{04}[lt]{\contourlength{1.2pt}\contour{white}{\sz\red $P_4'$}}
  \psfrag{05}[lb]{\contourlength{1.2pt}\contour{white}{\sz\red $P_5'$}}
  \psfrag{06}[rb]{\contourlength{1.2pt}\contour{white}{\sz\red $P_6'$}}
  \psfrag{07}[rc]{\contourlength{1.2pt}\contour{white}{\sz\red $P_7'$}}
  \psfrag{08}[ct]{\contourlength{1.2pt}\contour{white}{\sz\red $P_8'$}}
  \psfrag{09}[lc]{\contourlength{1.2pt}\contour{white}{\sz\red $P_9'$}}
  \psfrag{10}[lb]{\contourlength{1.2pt}\contour{white}{\sz\red $P_{10}'$}}
  \psfrag{11}[rb]{\contourlength{1.2pt}\contour{white}{\sz\red $P_{11}'$}}
  \psfrag{12}[rt]{\contourlength{1.2pt}\contour{white}{\sz\red $P_{12}'$}}
  \psfrag{13}[lt]{\contourlength{1.2pt}\contour{white}{\sz\red $P_{13}'$}}
  \psfrag{14}[lc]{\contourlength{1.2pt}\contour{white}{\sz\red $P_{14}'$}}
  \psfrag{Q1}[lt]{\contourlength{1.2pt}\contour{white}{\sz\blue $Q_1'$}}
  \psfrag{T1}[cb]{\contourlength{1.2pt}\contour{white}{\sz\blue $T_1'$}}
  \psfrag{S32}[lt]{\contourlength{1.2pt}\contour{white}{\sz\blue $S_3^{(2)\mskip 1mu\prime}$}}
  \psfrag{S54}[lc]{\contourlength{1.2pt}\contour{white}{\sz\blue $S_5^{(4)\mskip 1mu\prime}$}}
  \psfrag{S64}[cb]{\contourlength{1.2pt}\contour{white}{\sz\blue $S_6^{(4)\mskip 1mu\prime}$}}
  \psfrag{e'}[lt]{\contourlength{1.2pt}\contour{white}{\sz\red $\boldsymbol{e}'$}}
  \psfrag{c'}[rt]{\contourlength{1.2pt}\contour{white}{\sz\blue $\boldsymbol{c}'$}}
  \includegraphics[width=140mm]{\pfad bil8_143_allg} 
  \caption{Two isometric periodic billiards with $N=14$ and turning number $\tau=3$.}  
  \label{fig:isometric2}
\end{figure}

There is a surprising relation between billiards with ellipses and hyperbolas as caustics.

\begin{dft}\label{def:isometric} % finite or infinite polygons 
Two polygons $P_1P_2\dots$ and $P_1\iso P_2\iso \dots$ in the Euclidean 3-space are called {\em isometric} if corresponding sides $P_iP_{i+1}$ and $P_i\iso P_{i+1}\iso$ have equal lengths for all $i=1,2,\dots$
\end{dft}

\begin{thm}\label{thm:isometric}
\begin{enumerate}
\item For each billiard $P_1' P_2' \dots$ in an ellipse $e'$ with an ellipse $c'$ as caustic there exists an isometric billiard $P_1'' P_2'' \dots$ in an ellipse $e''$ with a hyperbola $c''$ as caustic.
The billiard inscribed in $e''$ is unique only up to a reflection in the principal axis.
\item Conversely, for each billiard with a hyperbola $c''$ as caustic there exists an isometric billiard with an ellipse $c'$ as caustic, provided that in the case of an $N$-periodic billiard with $N\equiv 2\pmod 4$, we traverse the billiard with the elliptic caustic twice.
\item For each side $P_i' P_{i+1}'$ of the original billiard, the isometry $[P_i',P_{i+1}'] \to [P_i'',P_{i+1}'']$ between the extended sides maps the incident points $S_k^{(j)\mskip 1mu\prime}$ with $j\equiv 0\pmod 2$ of the associated Poncelet grid to the corresponding point $S_k^{(j)\mskip 1mu\prime\prime}$ of the isometric billiard, hence 
\[  \ol{P_{i}'' S_k^{(j)\mskip 1mu\prime\prime}} = \ol{P_{i}' S_k^{(j)\mskip 1mu\prime}} \quad\mbox{for}\quad k = i+\ssmFrac j2\,. 
\]
\item For all $i$, the isometric image of the contact point $Q_i'$ of the side $P_i' P_{i+1}'$ with the ellipse $c'$ is the point of intersection $Q_i''$ of the side $P_i'' P_{i+1}''$ with the principal axis.
On the other hand, the isometry sends the point of intersection $T_i'$ between the side line $[P_i', P_{i+1}']$ and the principal axis to the contact point $T_i''$ of the side $P_i'' P_{i+1}''$ with the hyperbola $c''$ (\Figref{fig:isometric2}).
\end{enumerate}
\end{thm}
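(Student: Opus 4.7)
My plan is to realize both billiards as the two planar limits of a one-parameter family of focal billiards inscribed in a fixed triaxial ellipsoid $\mathcal{E}$, and to establish that the length of each side is a genuine invariant of this deformation. All extended sides of the deforming billiard will be generators of a common confocal one-sheeted hyperboloid; as its parameter $\lambda$ sweeps monotonically across its admissible range, the broken line moves continuously between the two principal planes of $\mathcal{E}$. In the principal plane containing the focal ellipse the limit yields a planar billiard inscribed in an ellipse $e'$ with elliptic caustic $c'$; in the principal plane containing the focal hyperbola it yields an isometric planar billiard in an ellipse $e''$ with hyperbolic caustic $c''$.

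Concretely I would fix $\mathcal{E}$ with semi-axes $a>b>c>0$, identify the given planar elliptic billiard in the $xy$-plane, and lift each vertex along the appropriate confocal trajectory on $\mathcal{E}$, opening consecutive sides into focal axes of a common caustic hyperboloid. The heart of the argument is then to prove that $\overline{P_i^{(\lambda)}P_{i+1}^{(\lambda)}}$ is independent of $\lambda$: along a focal axis the three elliptic spatial coordinates of an interior point are governed by two shared invariants (the parameters of the two confocal hyperboloids containing the axis) and one running parameter, and a computation parallel to \eqref{eq:k_e minus k_h} in the plane expresses the chord length in closed form in terms of just these two invariants. Since both invariants are deformation constants, so is the side length. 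Passing to the two planar limits identifies the corresponding sides of the billiards in $e'$ and $e''$, which is claim~1; claim~2 is obtained by reversing the construction, the caveat for $N\equiv 2\pmod 4$ accounting for the additional reflective symmetry of such hyperbolic billiards, which forces the associated elliptic billiard to be traversed twice before one matches a full closed orbit.

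For claim~3 I note that every $S_k^{(j)\prime}$ with even $j$ lies on the extended side line through $P_i'$; because the deformation acts rigidly on each extended side, the distance $\overline{P_i^{(\lambda)} S_k^{(j,\lambda)}}$ is constant in $\lambda$, giving the asserted length equality in the hyperbolic limit. For claim~4, at every $\lambda$ each side carries two distinguished points, its contact with the caustic hyperboloid and its intersection with the focal conic of $\mathcal{E}$ in the relevant principal plane; in the elliptic limit these appear as $Q_i'$ and $T_i'$, in the hyperbolic limit the two confocal objects interchange their roles, producing exactly the $Q\leftrightarrow T$ swap claimed in the theorem. The main obstacle will be establishing the 3D chord-length identity cleanly — setting up the elliptic spatial coordinates and extracting the length as a function of only the two axial invariants — and verifying that the two initial choices of lift out of the first principal plane correspond precisely to the reflective ambiguity mentioned in claim~1; after that, the remaining items follow by continuity and by bookkeeping of the distinguished points along the deformation.
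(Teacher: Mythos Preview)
Your plan coincides with the paper's: embed the elliptic billiard as the section $z=0$ of a triaxial ellipsoid $\mathcal E$ with focal ellipse $c'$, deform it through focal billiards whose sides are generators of confocal one-sheeted hyperboloids $\mathcal H_1(k_1)$, and recover the hyperbolic billiard as the flat limit $k_1\to -b_c^{\prime\,2}$ in the plane $y=0$. The one substantive difference is how the side-length invariance is established. You aim for a closed-form chord formula in the spirit of \eqref{eq:k_e minus k_h}; the paper instead uses an infinitesimal argument: the vertex paths are the $k_1$-coordinate lines on $\mathcal E$, hence orthogonal to every confocal $\mathcal H_1$ and therefore to the side $P_iP_{i+1}\subset\mathcal H_1$ at both endpoints, so $\frac{d}{dk_1}\,\overline{P_iP_{i+1}}=0$. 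Since the same orthogonality holds at \emph{every} point of a generator, the entire extended side moves rigidly, which immediately gives the rigidity you invoke for item~3 and the tracking of $Q_i$ and $T_i$ needed for item~4. Your direct route can be made to work, but your phrase ``the two confocal hyperboloids containing the axis'' is off: a focal axis lies on exactly one one-sheeted hyperboloid; the genuine deformation invariants attached to a chord are $k_0$ together with the two endpoint values of $k_2$ (both preserved under the axial affine scaling between hyperboloids, which is the paper's key lemma), and the computation you would need is precisely that the chord length is independent of $k_1$.
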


We use the notations $(a_e',b_e')$ and $(a_c',b_c')$ for the semiaxes of the ellipses $e'$ and $c'$ as well as $(a_e'',b_e'')$ and $(a_c'', b_c'')$ for those of $e''$ and the confocal hyperbola $c''$.
Finally, let $d'$ and $d''$ denote the respective eccentricities of the confocal pairs $(e',c')$ and $(e'',c'')$ and $k_e'$ as well as $k_e''$ the respective elliptic coordinates of $e'$ and $e''$ w.r.t.\ the caustics, i.e., $k_e' = a_e\strqu - a_c\strqu$ and $k_e'' = a_e\zwstrqu - a_c\zwstrqu$.
Then there hold the symmetric relations
\begin{equation}\label{eq:iso}
 \def\arraycolsep{0.5mm}
 \begin{array}{rclrclrclrclrcl}
  a_c'' &= &d', \quad &b_c'' &= &b_c', \quad &d'' &= &a_c', \quad &a_e'' &= &a_e', 
   \quad &b_e\zwstrqu &= &k_e'\,,
  \\[1mm]
  a_c' &= &d'', &b_c' &= &b_c'', &d' &= &a_c'', &a_e' &= &a_e'', 
   &b_e\strqu &= &k_e''\,.
 \end{array}  
\end{equation} 
Moreover, the distances of corresponding points $P_i'$ and $P_i''$ from the respective secondary axes are proportional.

Item~4 means % according to the definition of $r_i$ and $l_i$ in \eqref{eq:def_rili}, 
that
\[  \ol{P_i' Q_i'} = \ol{P_i'' Q_i''} \zwi \mbox{and}\zwi 
    \ol{P_i' Q_{i-1}'} = \ol{P_i'' Q_{i-1}''}
\]
with $Q_i''$ and $Q_{i-1}''$ as points of intersection of the principal axis with the sides through $P_i''\,$.

\medskip
Instead of a verification of \Thmref{thm:isometric} based on formulas from \cite{Sta_I}, we embed below the two isometric planar billiards as limiting poses in a continuous set of isometric spatial billiards in an ellipsoid.

% ------------------------------------------------------------------------------
\section{Focal billiards in an ellipsoid}
%       --------------------------------- 

Let the billiard in the ellipse $e'$ with the confocal ellipse $c'$ as caustic be placed in the plane $z=0$.
There exists an ellipsoid $\mathcal E$ through $e'$ with $c'$ as focal ellipse.
Hence, by \cite[p.~280]{Quadrics}, $\mathcal E$ has the semiaxes 
\begin{equation}\label{eq:ae,be,ce}
  a_e = a_e', \quad b_e = b_e', \quad c_e = \sqrt{a_e\strqu - a_c\strqu} = 
  \sqrt{k_e'}\,,\zwi\mbox{where}\zwi a_e > b_e > c_e.
\end{equation}
All quadrics which are confocal with $\mathcal E$ can be represented as
\begin{equation}\label{eq:confocal_Raum}  
  \frac{x^2}{a_c\strqu + k} + \frac{y^2}{b_c\strqu + k} + \frac{z^2}{k} = 1 \quad \mbox{for} \quad
  k\in \mathbb R\setminus\{-a_c\strqu,-b_c\strqu,0\}.
\end{equation}
We obtain the focal ellipse $c'$ as limiting case for $k=0$ and the ellipsoid $\mathcal E$ for 
\[  k = c_e^2 = k_e' = a_e\strqu - a_c\strqu = b_e\strqu - b_c\strqu.
\]
The family of confocal central quadrics contains  
\begin{equation}\label{eq:confoc_interval}
 \def\arraycolsep{1.0mm}
  \mbox{for} \ \left\{ \begin{array}{rcll}
        0\; < &k=k_0 &< \infty      &\mbox{triaxial ellipsoids,}\\[0.5mm]
 -b_c\strqu < &k=k_1 &< \;0         &\mbox{one-sheeted hyperboloids,}\\[0.5mm]
 -a_c\strqu < &k=k_2 &< -b_c\strqu~ &\mbox{two-sheeted hyperboloids}
  \end{array} \right.
\end{equation}
and the focal hyperbola $c''$ in the plane $y=0$ as the limit for $k = -b_c\strqu$ with the semiaxes $a_c'' = \sqrt{a_c\strqu - b_c\strqu} = d'$ and $b_c'' = b_c'\,$.

We recall that the family of confocal quadrics sends through each point $P = (x,y,z)$ with $xyz \ne 0$ three mutually orthogonal surfaces, one of each type.
The parameters $(k_0,k_1,k_2)$ of these quadrics are the {\em elliptic coordinates} of $P$ and satisfy
\begin{equation}\label{eq:elliptic-cartesian}
   x^2 = \frac{(a_c\strqu\!+k_0)(a_c\strqu\!+k_1)(a_c\strqu\!+k_2)}
          {(a_c\strqu - b_c\strqu)a_c\strqu}, \ \,
   y^2 = \frac{(b_c\strqu\!+k_0)(b_c\strqu\!+k_1)(b_c\strqu\!+k_2)}
          {b_c\strqu(b_c\strqu - a_c\strqu)}, \ \,
   z^2 = \frac{k_0 k_1 k_2}{a_c\strqu b_c\strqu}.
\end{equation}
Conversely, eight points in space, symmetrically placed w.r.t. the coordinate
frame, share their elliptic coordinates $(k_0,k_1,k_2)$.

In the $[x,y]$-plane, the traces of the confocal triaxial ellipsoids and the one-sheeted hyperboloids are ellipses confocal with $e'$ and respectively outside or inside the focal ellipse $c'$. 
The two-sheeted hyperboloids intersect the $[x,y]$-plane along confocal hyperbolas.
Their second elliptic coordinate $k_h'$ according to \eqref{eq:k_h} equals $k_2\,$.

At the point $P = (x,y,z)$ with $xyz \ne 0$ and position vector $\Vkt p$, the normal vectors to the three quadrics through $P$, 
\begin{equation}\label{eq:n1n2n3}
  \Vkt n_i = % \left{\begin{array}{l}
   \left( \Frac{x}{a_c\strqu + k_i}, \ \Frac{y}{b_c\strqu + k_i}, \ \Frac{z}{k_i} \right) 
    \zwi \mbox{for} \zwi k_i\ne 0,-b_c\strqu,-a_c\strqu \zwi \mbox{and} \zwi i=0,1,2 \,,
%     \\[3.5mm]
%    \left( \Frac{x}{a_c\strqu}, \,\Frac{y}{b_c\strqu}, \,0\right) \zwi \mbox{for} \zwi 
%     k_i = 0 \zwi \mbox{and} \zwi i=0,1\,, 
%     \\[3.5mm]
%    \left( \Frac{x}{d\strqu}, \,0, -\Frac{z}{b_c\strqu}\right) \zwi \mbox{for} \zwi 
%     k_i = -b_c\strqu \zwi \mbox{and} \zwi i=1,2\,, 
%     \\[3.5mm] 
%    \left( 0, -\Frac{y}{d\strqu}, -\Frac{z}{a_c\strqu}\right) \zwi \mbox{for} \zwi 
%     k_i =  -a_c\strqu \zwi \mbox{and} \zwi i=2\,,  
%   \end{array}
\end{equation}
are mutually orthogonal and yield the dot product 
\[  \langle \Vkt p, \Vkt n_i\rangle = \frac{x^2}{a_c\strqu + k_i} 
     + \frac{y^2}{b_c\strqu + k_i} + \frac{z^2}{k_i} = 1
\]              
and .
Therefore, confocal quadrics form a triply orthogonal system, and any two confocal quadrics of different types intersect each other along a line of curvature w.r.t\ both quadrics.
% In the exceptional cases the analogue normal vectors are
% \[  \left( \frac{x}{a_c\strqu}, \, \frac{y}{b_c\strqu}, \, 0\right) \ 
%     \mbox{for} \ k_i=0, \ 
%    \left( \frac{x}{d\strqu}, \,0,\, \frac{-z}{b_c\strqu}, \, 0\right) \ 
%     \mbox{for} \ k_i=-b_c\strqu, \ 
%    \left( 0,\,\frac{-y}{d\strqu},\,\frac{-z}{a_c\strqu}\right) \ 
%     \mbox{for} \ k_i=-a_c\strqu.
% \]    

Let $P$ be a common point of the ellipsoid $\mathcal E$ with $k_0 = k_e'$ and a confocal one-sheeted hyperboloid $\mathcal H_1$ given bei $k = k_1$.
Then the two generators of $\mathcal H_1$ through $P$ are symmetric w.r.t.\ $\mathcal E$, since they are asymptotic lines on $\mathcal H_1$ and the tangent plane to $\mathcal H_1$ at $P$ is orthogonal to $\mathcal E$.
In other words , the reflection in $\mathcal E$ exchanges the two generators of $\mathcal H_1$ through $P$ (\Figref{fig:symm_Erz}).

\begin{figure}[htb] % Fig.4
  \centering
  \def\sz{\small} 
  \psfrag{E}[cc]{\blue $\mathcal E$}
  \psfrag{H}[cc]{\blue $\mathcal H_1$}
  \psfrag{e1}[rb]{\sz\red $\boldsymbol{e}$}
  \psfrag{P}[cb]{\sz $P$}
  \psfrag{th}[cc]{\sz $\theta/2$}
  \psfrag{v2}[lb]{\contourlength{1.2pt}\contour{white}{\sz $\Vkt n_1$}}
  \psfrag{v3}[lt]{\contourlength{1.2pt}\contour{white}{\sz $\Vkt n_2$}}
  \includegraphics[width=110mm]{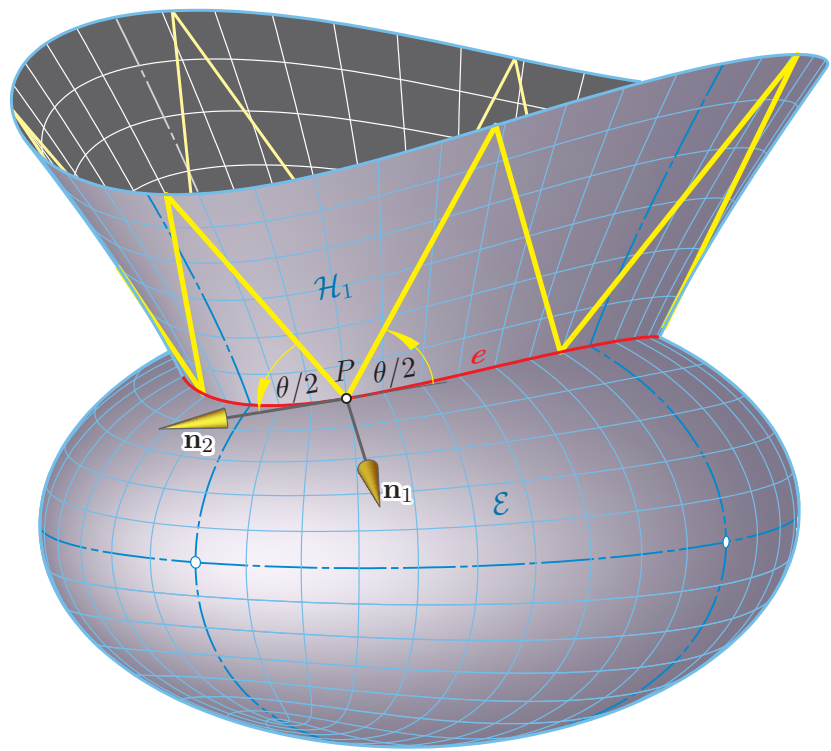} % bil_11.pas, elpsderz.eps 
  \caption{The reflection in the ellipsoid $\mathcal E$ permutes on each confocal one-sheeted hyperboloid $\mathcal H_1$ the two reguli.
  The yellow polygon is a focal billiard between two confocal ellipsoids.$^1$} % $^\footnotemark 
  \label{fig:symm_Erz} 
\end{figure}

% +++++++++++
\addtocounter{footnote}{1}
\footnotetext{Billiard between confocal conics and quadrics have already studied in \cite{DR_2006,DR_russ}.}
% +++++++++++
All generators of the one-sheeted hyperboloids in the confocal family together with the tangent lines of the two focal conics are called {\em focal axes}\footnote{
% +++++++++
The notation stems from the analogy to focal points of plane algebraic curves.
They are defined by the property that the isotropic lines through any focal point are tangents of the curve.
Similarly, the isotropic planes through any focal axis are tangent to all quadrics in the confocal family (see \cite[p.~289]{Quadrics}).}
% +++++++++
 of the confocal family as well as of any contained single ellipsoid $\mathcal E$.

\begin{dft}\label{def:focal_bill}
A billiard in an ellipsoid $\mathcal E$ is called {\em focal billiard} if each side is located on a focal axis of $\mathcal E$.
\end{dft}

\begin{rem}
There exist also billiards in $\mathcal E$ which are non-focal.
They can even be periodic, for example any periodic billiard inscribed in $e'$, but with a caustic different from the focal ellipse $c'$ of $\mathcal E$.
The side lines of any non-focal billiard in the ellipsoid are tangents to two fixed quadrics, which are confocal with $\mathcal E$ (see \cite[p.~332]{Quadrics}).  
\end{rem}

\begin{lem}\label{lem:focal_bill} % Lem 2
If one side line of a billiard in an ellipsoid $\mathcal E$ is located on a focal axis, then the billiard is a focal billiard.
This means, all side lines are generators of a confocal one-sheeted hyperboloid $\mathcal H_1$, and the vertices of the billiard are located on the line of curvature $e = \mathcal E \cap \mathcal H_1$ (\Figref{fig:raum_bil}).
\end{lem}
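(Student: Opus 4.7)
The plan is to exploit the key observation established just before the lemma in connection with \Figref{fig:symm_Erz}: at any point $P$ of the intersection curve $e=\mathcal E\cap \mathcal H_1$, the reflection in the tangent plane $T_P\mathcal E$ interchanges the two generators of $\mathcal H_1$ through $P$. Given this, the lemma reduces to a one–step induction along the sides of the billiard.

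Concretely, I would first assume that the initial side $P_1P_2$ lies on a generator $g_1$ of some confocal one-sheeted hyperboloid $\mathcal H_1$; the case in which the side is a tangent line of one of the focal conics is the degenerate limit as $k_1\to 0$ or $k_1\to -b_c\strqu$. Since $g_1\subset \mathcal H_1$ and $P_1,P_2\in\mathcal E$, both endpoints lie on the intersection $e=\mathcal E\cap\mathcal H_1$, which, by the triply orthogonal character of the confocal family recorded in \eqref{eq:n1n2n3}, is simultaneously a line of curvature of $\mathcal E$ and of $\mathcal H_1$.

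At the vertex $P_2$ the billiard law sends $P_1P_2$ to the outgoing side $P_2P_3$ via reflection in $T_{P_2}\mathcal E$. By the quoted symmetry property, this reflection swaps the two generators of $\mathcal H_1$ passing through $P_2$, so $P_2P_3$ is supported by the second generator $g_2\subset\mathcal H_1$, whence $P_3\in \mathcal E\cap g_2\subset e$. Iterating vertex-by-vertex shows that every extended side $[P_i,P_{i+1}]$ is a generator of the \emph{same} hyperboloid $\mathcal H_1$ and every vertex $P_i$ lies on $e$.

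There is no substantial obstacle: the geometric core — that the reflection in $\mathcal E$ permutes the two reguli of $\mathcal H_1$ through each point of $e$ — has already been verified using the fact that the generators of $\mathcal H_1$ are asymptotic lines and that $T_P\mathcal H_1\perp T_P\mathcal E$ at every $P\in e$. The only point worth emphasizing is uniqueness of the containing one-sheeted hyperboloid, which follows because the line $g_1$, not being tangent to $\mathcal E$ at either endpoint, determines the parameter $k_1$ uniquely in \eqref{eq:confocal_Raum}; the induction then guarantees that all subsequent sides reproduce this same parameter.
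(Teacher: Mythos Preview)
Your argument is correct and coincides with the paper's own proof: both rest on the same key fact that the reflection in $T_P\mathcal E$ swaps the two generators of $\mathcal H_1$ through any $P\in e=\mathcal E\cap\mathcal H_1$, from which the lemma follows by stepping vertex to vertex. Your version simply spells out the induction and the uniqueness of $k_1$ in more detail than the paper does.
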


\begin{proof}
This follows from the symmetry of the two $\mathcal H_1$-generators through any point $P\in e$ w.r.t.\ the tangent plane to $\mathcal E$ at $P$.
Along both loops of the curve of intersection $e = \mathcal E \cap \mathcal H_1$, the elliptic coordinates $k_0$ and $k_1$ remain constant, while $k_2$ varies.
\end{proof}

% ...................................................................................
\subsection{Focal billiards and Henrici's flexible hyperboloid}
%          .................................................... 

\begin{thm}\label{thm:isometric2} % Thm 3
Referring to the notation in \Thmref{thm:isometric}, suppose that the billiard $P_1' P_2' \dots$ lies in the $[x,y]$-plane and the isometric billiard $P_1'' P_2'' \dots$ in the $[x,z]$ plane, such that the circumscribed ellipses $e'$ and $e''$ share the principal vertices on the $x$-axis and the semiaxes satisfy \eqref{eq:iso}.
Then the isometric transition from $P_1' P_2' \dots$ to $P_1'' P_2'' \dots$ can be carried out continuously via mutually isometric focal billiards $P_1 P_2 \dots$ in the fixed triaxial ellipsoid $\mathcal E$ through $e'$ and $e''$ with the focal ellipse $c'$ and the focal hyperbola $c''$ (\Figref{fig:raum_bil}).
\\[1.0mm]
During this transition, which is an affine motion, the paths of the billiards' vertices on $\mathcal E$ are orthogonal trajectories of the confocal one-sheeted hyperboloids.
In the two flat poses, the sides of the billiards $P_1' P_2' \dots$ in $e'$ and $P_1'' P_2'' \dots$ in $e''$ are tangent to the respectively coplanar focal conics $c'$ and $c''$.
\end{thm}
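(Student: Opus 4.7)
\begin{Proof}
My plan is to exhibit the motion explicitly in elliptic coordinates and to reduce the isometry of the sides to a single algebraic identity---the rigidity implicit in Henrici's flexible hyperboloid. The first step is to rewrite \eqref{eq:elliptic-cartesian} on $\mathcal E$ (where $k_0=k_e'$ is now fixed) in product form,
\[
 x=\frac{\epsilon_x\,a_e}{a_c'\,d'}\sqrt{a_c\strqu{+}k_1}\sqrt{a_c\strqu{+}k_2},\ \
 y=\frac{\epsilon_y\,b_e}{b_c'\,d'}\sqrt{b_c\strqu{+}k_1}\sqrt{b_c\strqu{+}k_2},\ \
 z=\frac{\epsilon_z\,c_e}{a_c'\,b_c'}\sqrt{-k_1}\sqrt{-k_2},
\]
with $\epsilon_x,\epsilon_y,\epsilon_z\in\{\pm1\}$ fixing an octant. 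By \Lemref{lem:focal_bill} the vertices of the focal billiard on $\mathcal H_1$ share a common value of $k_1$ and carry individual parameters $k_2^{(i)}$, so I would \emph{define} the motion by letting $k_1$ slide in $[-b_c\strqu,\,0]$ while holding each $k_2^{(i)}$ fixed. The factorization above instantly shows that the deformation is the diagonal affinity $(x,y,z)\mapsto(\lambda_1 x,\lambda_2 y,\lambda_3 z)$ whose scale factors $\lambda_j^2$ are the ratios of the corresponding $k_1$-radicals; these same factors absorb exactly the denominators in the equation of $\mathcal H_1$, so generators of $\mathcal H_1(k_1)$ map to generators of $\mathcal H_1(k_1')$ and each deformed polygon is again a focal billiard. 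The trajectory of each vertex is the curve $\{k_2=k_2^{(i)}\}$ on $\mathcal E$---a line of curvature from the two-sheeted hyperboloid family---and by triorthogonality these are precisely the orthogonal trajectories on $\mathcal E$ of the lines of curvature from the $\mathcal H_1$-family.

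The main point, which I expect to be the real work, is the isometry. Decomposing a side as $\Delta x=\sqrt{a_c\strqu{+}k_1}\,U$, $\Delta y=\sqrt{b_c\strqu{+}k_1}\,V$, $\Delta z=\sqrt{-k_1}\,W$ with $U,V,W$ depending only on $(k_2^{(i)},k_2^{(i+1)})$ and on the chosen octants, the fact that the side lies on a ruling of $\mathcal H_1(k_1)$ forces its direction into the asymptotic cone of $\mathcal H_1(k_1)$, which after substitution collapses to the single $k_1$-free identity $U^2+V^2-W^2=0$. Substituting back,
\[
 \bigl|P_i P_{i+1}\bigr|^2=(a_c\strqu{+}k_1)U^2+(b_c\strqu{+}k_1)V^2+(-k_1)W^2
  =a_c\strqu\,U^2+b_c\strqu\,V^2,
\]
so the chord length is independent of $k_1$. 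This is precisely the rigidity condition behind Henrici's hyperboloid and at once delivers the mutual isometry of all members of the family.

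For the two planar limits I would read off the product form directly. As $k_1\to 0$, $z\to 0$ and the vertices land on $e'=\mathcal E\cap\{z=0\}$; the one-sheeted hyperboloid $\mathcal H_1$ degenerates to the exterior of $c'$ in the $[x,y]$-plane and its generators become tangents of $c'$, recovering the billiard with an elliptic caustic. Symmetrically, as $k_1\to -b_c\strqu$, $y\to 0$ and the vertices land on $e''=\mathcal E\cap\{y=0\}$, while the generators become tangents of $c''$, recovering the billiard with a hyperbolic caustic; the semiaxis identifications~\eqref{eq:iso} drop out by inspection of the respective planar equations. The only bookkeeping obstacle I foresee is choosing the signs $\epsilon_x,\epsilon_y,\epsilon_z$ consistently so that each vertex's octant and each side's regulus vary continuously with $k_1$; this is handled by demanding that the identity $U^2+V^2=W^2$ hold with the same sign choices along the whole motion.
\end{Proof}
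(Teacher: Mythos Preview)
Your argument is correct and complete, but it takes a genuinely different route to the isometry than the paper does. The paper argues \emph{kinematically}: since each vertex moves along a $k_1$-line (i.e., a line of curvature on $\mathcal E$ orthogonal to the family of confocal one-sheeted hyperboloids), its instantaneous velocity is orthogonal to the generator of $\mathcal H_1$ carrying the adjacent side; hence the first variation of each side length vanishes, and the length is constant in $k_1$. Your argument is \emph{algebraic}: you factor the coordinates so that the $k_1$-dependence sits in the scalars $\sqrt{a_c\strqu+k_1}$, $\sqrt{b_c\strqu+k_1}$, $\sqrt{-k_1}$, observe that the chord direction lies on the asymptotic cone of $\mathcal H_1$ (whence $U^2+V^2=W^2$), and then watch the $k_1$-terms cancel explicitly from $|P_iP_{i+1}|^2$.

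Each approach buys something. The paper's infinitesimal argument immediately extends to \emph{every} point attached to the moving side line, not just the endpoints---which is exactly what is needed for items~3 and~4 of \Thmref{thm:isometric} (the Poncelet-grid points $S_k^{(j)}$, the tangency points $Q_i'$, $T_i''$). Your computation, as written, handles only the vertices, though the same factorization would work for any fixed affine combination along the chord. On the other hand, your route is entirely self-contained (it does not invoke the triply orthogonal structure for the length claim) and produces the closed formula $|P_iP_{i+1}|^2=a_c\strqu U^2+b_c\strqu V^2$, which makes the link to Henrici's rigidity transparent. The sign bookkeeping you flag is genuine but harmless: the octant signs $\epsilon_x,\epsilon_y,\epsilon_z$ are locally constant along each $k_1$-trajectory (only one coordinate vanishes, at a limit), so $U,V,W$ are well defined throughout the motion.
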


\begin{figure}[htb] % Fig.5
  \centering
  \def\sz{\small} 
  \psfrag{P1}[ct]{\contournumber{32}\contourlength{1.6pt}\contour{hgrau}{\sz\red $P_1$}}
  \psfrag{P2}[lb]{\contournumber{32}\contourlength{1.6pt}\contour{hgrau}{\sz\red $P_2$}}
  \psfrag{P3}[rb]{\contourlength{1.2pt}\contour{white}{\sz\red $P_3$}}
  \psfrag{P4}[rt]{\contourlength{1.2pt}\contour{white}{\sz\red $P_4$}}
  \psfrag{P5}[ct]{\contourlength{1.2pt}\contour{white}{\sz\red $P_5$}}
  \psfrag{P6}[lt]{\contournumber{32}\contourlength{1.4pt}\contour{white}{\sz\red $P_6$}}
  \psfrag{P7}[lb]{\contourlength{1.2pt}\contour{white}{\sz\red $P_7$}}
%   \psfrag{P8}[cb]{\sz\white $P_8$} % 
  \psfrag{P8}[cb]{\contourlength{1.2pt}\contour{dgrau}{\sz\white $P_8$}}
  \psfrag{P9}[rt]{\contournumber{32}\contourlength{1.6pt}\contour{hgrau}{\sz\red $P_9$}}
  \psfrag{P10}[lt]{\contourlength{1.2pt}\contour{white}{\sz\red $P_{10}$}}
  \psfrag{P11}[lb]{\contourlength{0.5pt}\contour{white}{\sz\red $P_{11}$}}
%   \psfrag{P12}[cb]{\sz\white $P_{12}$} % 
  \psfrag{P12}[cb]{\contourlength{1.2pt}\contour{dgrau}{\sz\white $P_{12}$}}
  \psfrag{P13}[rb]{\contournumber{32}\contourlength{1.3pt}\contour{white}{\sz\red $P_{13}$}}
  \psfrag{P14}[rt]{\contourlength{1.2pt}\contour{white}{\sz\red $P_{14}$}}
  \psfrag{P1'}[lt]{\contournumber{32}\contourlength{1.6pt}\contour{hgrau}{\sz $P_1'$}}
%   \psfrag{P3'}[lt]{\sz\white $P_3'$} % 
  \psfrag{P3'}[lt]{\contourlength{1.2pt}\contour{dgrau}{\sz\white $P_3'$}}
  \psfrag{P5'}[ct]{\contournumber{32}\contourlength{1.6pt}\contour{hgrau}{\sz $P_5'$}}
%   \psfrag{P8'}[lc]{\sz\white $P_8'$} % 
  \psfrag{P8'}[lc]{\contourlength{1.2pt}\contour{dgrau}{\sz\white $P_8'$}}
  \psfrag{P12'}[cb]{\contournumber{32}\contourlength{1.6pt}\contour{dgrau}{\sz\white $P_{12}'$}}  
  \psfrag{P14'}[lb]{\contournumber{32}\contourlength{1.4pt}\contour{hgrau}{\sz $P_{14}'$}}
  \psfrag{E}[lt]{\blue $\mathcal E$}
  \psfrag{e}[rb]{\sz\red $\boldsymbol{e}$}
  \psfrag{e'o}[rb]{\sz\white $\boldsymbol{e'}$}  
  \psfrag{e'u}[rb]{\contourlength{0.0pt}\contour{white}{\sz $\boldsymbol{e'}$}}
  \psfrag{c}[lt]{\sz\white $\boldsymbol{c'}$} % {\contourlength{1.6pt}\contour{white}{\sz $\boldsymbol{c'}$}}
  \psfrag{x}[lt]{\sz $x$}
  \psfrag{y}[rt]{\sz $y$}
  \includegraphics[width=130mm]{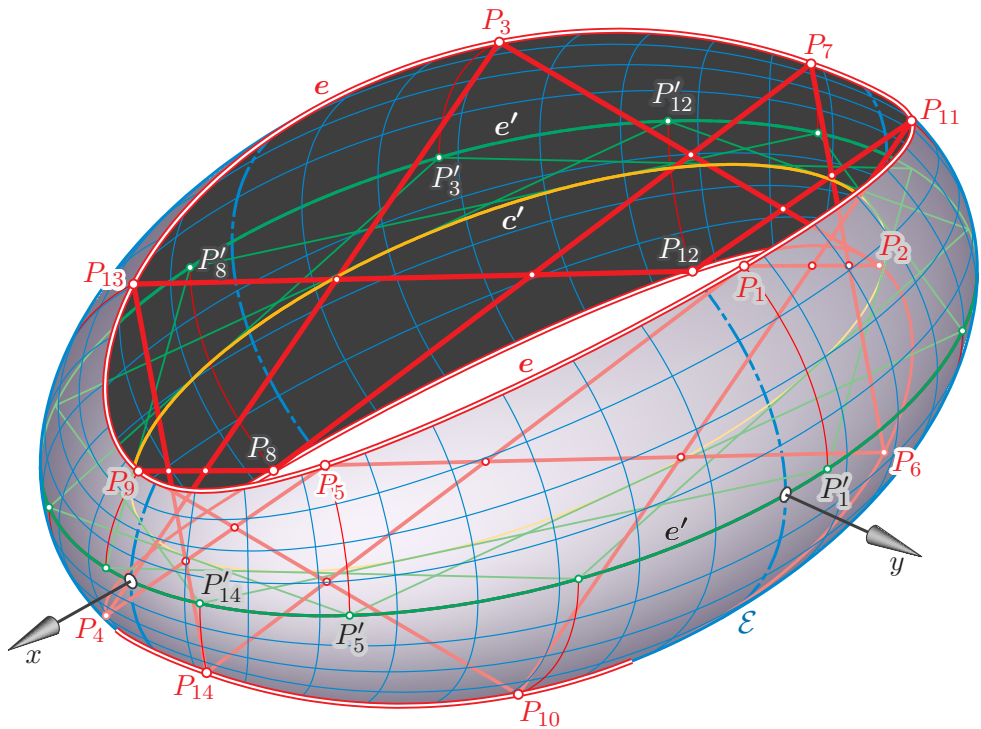} % bil_10.pas, elpsd_14.eps 
  \caption{The transition from the periodic flat billiard $P_1'P_2'\dots P_{14}'$ (green) in $e'$ with caustic $c'$ to the isometric focal billiard $P_1P_2\dots P_{14}$ (red) in the ellipsoid $\mathcal E$ with vertices on the line of curvature $e$.}
  \label{fig:raum_bil}
\end{figure}

In order to prove \Thmref{thm:isometric2}, we focus on any pair of confocal one-sheeted hyperboloids $\mathcal H_1$ and $\mathcal H_1\iso$ with the respective elliptic coordinates $k_1$ and $k_1\iso$ and $-b_c\strqu < k_1,k_1\iso < 0$.
There is an axial scaling
\begin{equation}\label{eq:corr_points} % {eq:gamma}
 \begin{array}{rl}
   \gamma(k_1,k_1\iso)\!:  
    &P = (x,y,z)\, \mapsto \, P\iso = (x\iso, y\iso, z\iso)\zwi\mbox{with}
     \zwi \mathcal H_1 \to \mathcal H_1\iso \zwi\mbox{and}
   \\[3.0mm]
    &x\iso = x\, \sqrt{\Frac{a_c\strqu + k_1\iso}{a_c\strqu + k_1}}\,, \quad 
     y\iso = y\, \sqrt{\Frac{b_c\strqu + k_1\iso}{b_c\strqu + k_1}}\,, \quad 
     z\iso = z\,\sqrt{\Frac{k_1\iso}{k_1}}\,.
 \end{array}
\end{equation}
This affine transformation maps generators of $\mathcal H_1$ to those of $\mathcal H_1\iso$, but also lines of curvature of $\mathcal H_1$ to those of $\mathcal H_1\iso$.
The latter follows from the Lemma below.

\begin{lem}\label{lem:corr_points} % Lem 4 
Let $P\in\mathcal H_1$ and $P\iso\in\mathcal H_1\iso$ be corresponding points under the affine transformation $\gamma(k_1,k_1\iso)$ defined in {\em \eqref{eq:corr_points}}.
Then, $P$ and $P\iso$ share the elliptic coordinates $k_0$ and $k_2$.
\end{lem}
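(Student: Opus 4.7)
The plan is to read off the elliptic coordinates of $P\iso$ directly from the parametrization \eqref{eq:elliptic-cartesian}, which expresses the squared cartesian coordinates as a product of linear factors in the three parameters $k_0, k_1, k_2$. The scaling $\gamma(k_1, k_1\iso)$ is tailor-made to replace exactly the $k_1$-factors by $k_1\iso$-factors.

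Concretely, suppose $P = (x,y,z)$ has elliptic coordinates $(k_0, k_1, k_2)$ with $xyz \neq 0$, so that formulas \eqref{eq:elliptic-cartesian} apply. Then
\begin{equation*}
  x\iso{}^2 = x^2\,\frac{a_c\strqu + k_1\iso}{a_c\strqu + k_1}
  = \frac{(a_c\strqu + k_0)(a_c\strqu + k_1\iso)(a_c\strqu + k_2)}{(a_c\strqu - b_c\strqu)\, a_c\strqu},
\end{equation*}
and analogous substitutions give
\begin{equation*}
  y\iso{}^2 = \frac{(b_c\strqu + k_0)(b_c\strqu + k_1\iso)(b_c\strqu + k_2)}{b_c\strqu\,(b_c\strqu - a_c\strqu)}, \qquad
  z\iso{}^2 = \frac{k_0\, k_1\iso\, k_2}{a_c\strqu\, b_c\strqu}\,.
\end{equation*}
But the right-hand sides are precisely the expressions \eqref{eq:elliptic-cartesian} evaluated at the triple $(k_0, k_1\iso, k_2)$. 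Since \eqref{eq:elliptic-cartesian} parametrizes the eight points with prescribed elliptic coordinates (cf.\ the remark following \eqref{eq:elliptic-cartesian}), $P\iso$ must therefore have elliptic coordinates $(k_0, k_1\iso, k_2)$, proving the claim.

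As a consistency check, the same fact can be verified via the characteristic cubic: substituting $(x\iso{}^2, y\iso{}^2, z\iso{}^2)$ into the denominator-cleared form of \eqref{eq:confocal_Raum} yields a monic cubic in $k$ whose roots are the elliptic coordinates of $P\iso$; by the above, this cubic equals $(k-k_0)(k-k_1\iso)(k-k_2)$. The only mildly subtle point is the sign ambiguity hidden in the square roots of \eqref{eq:elliptic-cartesian}: since the scaling $\gamma(k_1, k_1\iso)$ in \eqref{eq:corr_points} uses positive square roots and hence preserves the octant of $P$, the correspondence between $P\iso$ and the triple $(k_0, k_1\iso, k_2)$ is unambiguous, so no spurious solution among the eight symmetric copies appears. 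The degenerate case $xyz = 0$ follows by continuity, letting $P$ approach a coordinate plane while keeping $k_0$ and $k_2$ fixed.
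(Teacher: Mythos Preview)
Your proof is correct but takes a different route from the paper's. You exploit the explicit product formulas \eqref{eq:elliptic-cartesian} and observe that the three scaling factors in $\gamma(k_1,k_1\iso)$ are precisely what is needed to replace the factor $(a_c\strqu+k_1)$ by $(a_c\strqu+k_1\iso)$ in $x^2$, and similarly for $y^2$ and $z^2$; hence $(x\iso,y\iso,z\iso)$ satisfies \eqref{eq:elliptic-cartesian} for the triple $(k_0,k_1\iso,k_2)$. The paper instead stays with the implicit equation of the confocal family, setting $F(P,k):= \frac{x^2}{a_c\strqu+k}+\frac{y^2}{b_c\strqu+k}+\frac{z^2}{k}-1$ and proving the partial-fraction identity
\[
  F(P\iso,k_j)=\frac{k_1\iso-k_1}{k_j-k_1}\,F(P,k_1)+\Bigl(1-\frac{k_1\iso-k_1}{k_j-k_1}\Bigr)F(P,k_j),
\]
so that $F(P,k_1)=F(P,k_j)=0$ forces $F(P\iso,k_j)=0$. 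Your argument is shorter and makes the geometric meaning of $\gamma$ transparent as a ``swap of the $k_1$-slot'' in the elliptic parametrization; the paper's argument avoids invoking the inverse formulas \eqref{eq:elliptic-cartesian} altogether and therefore needs no separate continuity step for points with $xyz=0$, since the affine-combination identity is purely algebraic in $(x,y,z)$.
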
 

\begin{proof}
It is sufficient to show that the ellipsoid and the two-sheeted hyperboloid through $P = (x,y,z)$ passes also through $P\iso = (x\iso, y\iso, z\iso)$.
For the polynomial
\[  F(P,k):= \frac{x^2}{a_c\strqu + k} + \frac{y^2}{b_c\strqu + k} + \frac{z^2}{k} - 1
\]
holds $F(P,\,k_1) =  0$ and $F(P\iso,\,k_1\iso) = 0$.
We prove for $k = k_j$, $j = 0,2$ and therefore $k_j\ne k_1$, that the equation $F(P,\,k_j) = 0$ implies $F(P\iso,\,k_j) = 0$:
\\
The claim is a consequence of the identity
\[  F(P\iso,\,k_j) = \smFrac{k_1\iso-k_1}{k_j-k_1}\,F(P,k_1) + \left(1 - \smFrac{k_1\iso-k_1}{k_j-k_1}\right)F(P,\,k_j).
\]
This representation as an affine combination can readily be verified by 
\[  \frac{1}{a_c\strqu + k_j}\,x\strqu = 
    \frac{a_c\strqu + k_1\iso}{(a_c\strqu + k_1)(a_c\strqu + k_j)}\,x^2 = 
    \left[\frac{k_1\iso - k_1}{(k_j - k_1)(a_c\strqu + k_1)} 
     + \frac{k_j - k_1\iso}{(k_j - k_1)(a_c\strqu + k_j)}\right] x^2. 
\]
Similar identities hold for the terms with $y^2$ and $z^2$ in $F(P\iso,\,k_j)$.
\end{proof}

The choice $k_1\iso = 0$ in \eqref{eq:corr_points} defines a singular affine transformation which sends the focal billiard $P_1P_2\dots$ to the isometric planar billiard $P_1'P_2'\dots$ inscribed in the ellipse $e'\subset\mathcal E$ in the plane $z=0$ (\Figref{fig:raum_bil}).
From the standard parametrization $x' = a_e'\cos t$, $y' = b_e'\sin t$ of $e'$ follows with \eqref{eq:corr_points} for the curve $e = \mathcal E\cap\mathcal H_1$
\[  x(t) = \frac{a_{h_1}}{a_c'}\,x'\,,\quad
    y(t) = \frac{b_{h_1}}{b_c'}\,y'\,
\]
with 
\begin{equation}\label{eq:ah,bh,ch}
  a_{h_1} = \sqrt{a_c\strqu + k_1}, \quad b_{h_1} = \sqrt{b_c\strqu + k_1},\quad
    c_{h_1} = \sqrt{-k_1}
\end{equation}
as semiaxes of $\mathcal H_1$.
This yields for the upper and the lower loop of $e$ the parametrizations
\begin{equation}\label{eq:Bahn_Raumbil} 
  \Vkt e(t) = \left( x(t),\,y(t),\,z(t)\right) = \left(
   \Frac{a_e a_{h_1}}{a_c'}\,\cos t, \ \Frac{b_e b_{h_1}}{b_c'}\,\sin t, \
   \pm\Frac{\sqrt{k_0 k_1 k_2(t)}}{a_c' b_c'} \right), 
\end{equation}
where the third equation follows from \eqref{eq:elliptic-cartesian} with $k_0 = k_e'$ and $k_2 = k_h'$ by \eqref{eq:k_h}, i.e., for
\begin{equation}\label{eq:k0,k2}
   k_0 = a_e^2 - a_c\strqu = b_e^2 - b_c\strqu = c_e^2, \quad
   k_2 = k_2(t) = -(a_c\strqu \sin^2 t + b_c\strqu \cos^2 t).
\end{equation}

For the second limit $k_1\iso = -b_c\strqu$, the image under $\gamma(k_1,-b_c\strqu)$ is the isometric planar billiard $P_1'' P_2''\dots$ (\Figref{fig:isometric2}) inscribed in the ellipse $e''\subset\mathcal E$ in $y=0$ with the focal hyperbola $c''$ of $\mathcal E$ as caustic.
By virtue of \eqref{eq:corr_points}, we obtain for the upper and lower subarcs of $e''$ the parametrizations
\begin{equation}\label{eq:param hyp_bil}
   x''(t) = x(t) \sqrt{\smFrac{a_c\strqu - b_c\strqu}{a_c\strqu + k_1}}
          = \frac{a_e d' \cos t}{a_c'}\,, \quad y'' = 0,\quad
   z''(t) = z(t) \frac{b_c'}{\sqrt{-k_1}} 
         = \pm \frac{\sqrt{-k_0 k_2(t)}}{a_c'}
    % \frac{\sqrt{k_e'(a_c\strqu \sin^2 t + b_c\strqu \cos^2 t)}}{a_c'}\,.
\end{equation}
The extreme locations of the vertices $P_i''\in e''$ with $t=0$ or $\pi$ are umbilics of the ellipsoid $\mathcal E$ (see \cite[Fig.~2]{Quadrics}).

\begin{rem} % Rem.3
For any two poses $P_1P_2\dots$ and $P_1\iso P_2\iso\dots$ including the flat limits $P_1'P_2'\dots$ and $P_1'' P_2''\dots$, the corresponding pairs of consecutive vertices
form a skew isogram $P_iP_{i+1}P\iso_iP\iso_{i+1}$, since $\ol{P_iP_{i+1}} = 
\ol{P\iso_i P\iso _{i+1}}$ and $\ol{P_iP\iso_{i+1}} = \ol{P\iso_i P_{i+1}}$.
The latter equation is a consequence of Ivory's Theorem (see, e.g., \cite[p.~306]{Quadrics}).
Skew isograms have an axis of symmetry which connects the midpoints of the two diagonals.
Therefore, the isometry between corresponding side lines $[P_i, P_{i+1}]$ and $[P\iso_i, P\iso_{i+1}]$ is an axial symmetry. 
\\
By the same token, Ivory's theorem implies more general $\ol{P_iP\iso_j} = \ol{P\iso_i P_j}$ for all $(i,j)$. 
\end{rem}

\begin{figure}[htb] % Fig.6
  \centering
  \def\sz{\small} 
  \psfrag{E}[cc]{\blue $\mathcal E$}
  \psfrag{H}[cc]{\blue $\mathcal H_1$}
  \psfrag{e1}[rb]{\sz\red $\boldsymbol{e}$}
  \psfrag{1}[lb]{\contourlength{1.2pt}\contour{hgrau}{\sz $P_1$}}
  \psfrag{2}[rb]{\contourlength{1.2pt}\contour{white}{\sz $P_2$}}
  \psfrag{6}[lt]{\contourlength{1.2pt}\contour{white}{\sz $P_6$}}
  \psfrag{7}[rb]{\contourlength{1.2pt}\contour{hgrau}{\sz $P_7$}}
  \psfrag{10}[rt]{\contournumber{32}\contourlength{1.2pt}\contour{white}{\sz $P_{10}$}}
  \psfrag{11}[rb]{\contourlength{1.2pt}\contour{hgrau}{\sz $P_{11}$}}
  \psfrag{14}[ct]{\contournumber{32}\contourlength{1.2pt}\contour{white}{\sz $P_{14}$}}
  \psfrag{13}[rb]{\contourlength{1.4pt}\contour{dgrau}{\sz\white $P_{13}$}}
  \psfrag{15}[rb]{\contourlength{1.2pt}\contour{hgrau}{\sz $P_{15}$}}
  \psfrag{16}[lb]{\contournumber{32}\contourlength{1.2pt}\contour{gelbb}{\sz $P_{16}$}}
  \psfrag{19}[rb]{\contourlength{1.2pt}\contour{hgrau}{\sz $P_{19}$}}
  \psfrag{20}[rb]{\contourlength{1.2pt}\contour{white}{\sz $P_{20}$}}
   \psfrag{S3^2}[rt]{\sz\red $S_3^{(2)}$}
   \psfrag{S4^2}[lt]{\sz\red $S_4^{(2)}$}
   \psfrag{S7^2}[rc]{\sz\red $S_7^{(2)}$}
   \psfrag{S8^2}[rt]{\sz\red $S_8^{(2)}$}
   \psfrag{S9^2}[lb]{\sz\red $S_9^{(2)}$}
   \psfrag{S11^2}[rb]{\sz\red $S_{11}^{(2)}$}
   \psfrag{S12^2}[ct]{\contourlength{1.2pt}\contour{white}{\sz\red $S_{12}^{(2)}$}}
   \psfrag{S13^2}[lt]{\contourlength{1.2pt}\contour{hgrau}{\sz\red $S_{13}^{(2)}$}}
   \psfrag{S15^2}[rb]{\sz\red $S_{15}^{(2)}$}
   \psfrag{S16^2}[rc]{\sz\red $S_{16}^{(2)}$}
   \psfrag{S17^2}[lt]{\contourlength{1.2pt}\contour{hgrau}{\sz\red $S_{17}^{(2)}$}}
   \psfrag{S18^2}[lc]{\sz\red $S_{18}^{(2)}$}
   \psfrag{S19^2}[lb]{\sz\red $S_{19}^{(2)}$}
   \psfrag{S21^2}[lt]{\contourlength{1.2pt}\contour{hgrau}{\sz\red $S_{21}^{(2)}$}}
   \psfrag{S22^2}[lt]{\sz\red $S_{22}^{(2)}$}
   \psfrag{Q1}[lb]{\contourlength{1.2pt}\contour{hgrau}{\sz\blue $Q_1$}}
   \psfrag{S6^8}[ct]{\contourlength{1.2pt}\contour{hgrau}{\sz\red $S_{6}^{(8)}$}}
   \psfrag{S4^4}[ct]{\contourlength{1.2pt}\contour{hgrau}{\sz $S_{4}^{(4)}$}}
   \psfrag{S21^4}[lc]{\contourlength{1.2pt}\contour{hgrau}{\sz\red $S_{21}^{(4)}$}}

   \psfrag{e}[rb]{\gelbb $\boldsymbol{e}$}
   \psfrag{e2}[lt]{\red $\boldsymbol{e}^{(2)}$}
   \psfrag{e4}[ct]{$\boldsymbol{e}^{(4)}$}
   \psfrag{e8}[ct]{$\boldsymbol{e}^{(8)}$}
   \psfrag{E}[rc]{\blue $\mathcal E$}
   \psfrag{H1}[lt]{\blue $\mathcal H_1$}
   \psfrag{H1u}[cc]{\blue $\mathcal H_1$}
  \includegraphics[width=110mm]{\pfad elpsd_22} % bil_12.pas, 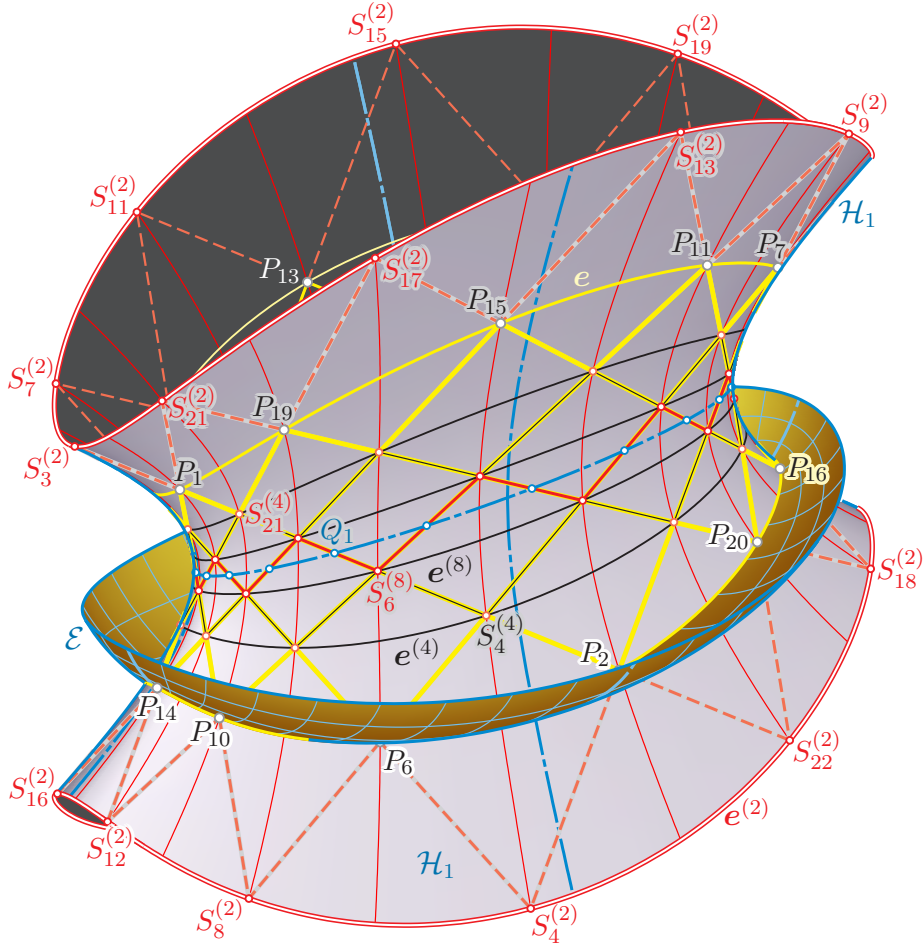 
  \caption{Periodic focal billiard in $\mathcal E$ along $e$ (yellow) with $N=22$ and turning number $\tau=5$ together with the spatial Poncelet grid.
The extended sides determine focal billiards along $e^{(2)}$ with $\tau = 7$ (red dashed), along $e^{(4)}$ with $\tau=3$ (black), and along $e^{(8)}$ with $\tau=1$ (red).}
  \label{fig:bil_mit_Hyp} 
\end{figure}

\begin{figure}[htb] % Fig.7, zweite Annahme
  \centering
  \def\sz{\small} 
  \psfrag{E}[cc]{\blue $\mathcal E$}
  \psfrag{H}[cc]{\blue $\mathcal H_1$}
  \psfrag{e1}[rb]{\sz\red $\boldsymbol{e}$}
  \psfrag{1}[lb]{\contourlength{1.2pt}\contour{hgrau}{\sz $P_1$}}
  \psfrag{2}[rb]{}
  \psfrag{3}[rb]{}
  \psfrag{4}[rb]{}
  \psfrag{5}[rb]{}
  \psfrag{6}[lt]{\contourlength{1.2pt}\contour{white}{\sz $P_6$}}
  \psfrag{7}[lt]{\sz $P_7$}
  \psfrag{8}[rb]{}  
  \psfrag{9}[lc]{\contourlength{1.2pt}\contour{dgrau}{\sz\white $P_9$}}
  \psfrag{10}[rt]{\contournumber{32}\contourlength{1.2pt}\contour{white}{\sz $P_{10}$}}
  \psfrag{11}[rb]{\contourlength{1.2pt}\contour{hgrau}{\sz $P_{11}$}}
  \psfrag{12}[rb]{}  
  \psfrag{14}[ct]{\contournumber{32}\contourlength{1.2pt}\contour{white}{\sz $P_{14}$}}
  \psfrag{13}[cb]{\contourlength{1.4pt}\contour{dgrau}{\sz\white $P_{13}$}}
  \psfrag{15}[rb]{\contourlength{1.2pt}\contour{hgrau}{\sz $P_{15}$}}
  \psfrag{16}[lb]{\contournumber{32}\contourlength{1.2pt}\contour{gelbb}{\sz $P_{16}$}}
  \psfrag{17}[cb]{\contourlength{1.2pt}\contour{dgrau}{\sz\white $P_{17}$}}
  \psfrag{18}[rb]{}
  \psfrag{19}[cb]{\contourlength{1.2pt}\contour{hgrau}{\sz $P_{19}$}}
  \psfrag{20}[rb]{\contourlength{1.2pt}\contour{hgrau}{\sz $P_{20}$}}
  \psfrag{21}[rb]{}
  \psfrag{22}[rb]{}  
   \psfrag{S3^2}[rt]{\sz\red $S_3^{(2)}$}
   \psfrag{S4^2}[lt]{\sz\red $S_4^{(2)}$}
   \psfrag{S7^2}[rc]{\sz\red $S_7^{(2)}$}
   \psfrag{S8^2}[rt]{\sz\red $S_8^{(2)}$}
   \psfrag{S9^2}[lb]{\sz\red $S_9^{(2)}$}
   \psfrag{S11^2}[rb]{\sz\red $S_{11}^{(2)}$}
   \psfrag{S12^2}[ct]{\contourlength{1.2pt}\contour{white}{\sz\red $S_{12}^{(2)}$}}
   \psfrag{S13^2}[lt]{\contourlength{1.2pt}\contour{hgrau}{\sz\red $S_{13}^{(2)}$}}
   \psfrag{S15^2}[rb]{\sz\red $S_{15}^{(2)}$}
   \psfrag{S16^2}[rc]{\sz\red $S_{16}^{(2)}$}
   \psfrag{S17^2}[lt]{\contourlength{1.2pt}\contour{hgrau}{\sz\red $S_{17}^{(2)}$}}
   \psfrag{S18^2}[lc]{\sz\red $S_{18}^{(2)}$}
   \psfrag{S19^2}[lb]{\sz\red $S_{19}^{(2)}$}
   \psfrag{S21^2}[lt]{\contourlength{1.2pt}\contour{hgrau}{\sz\red $S_{21}^{(2)}$}}
   \psfrag{S22^2}[lt]{\sz\red $S_{22}^{(2)}$}
   \psfrag{Q1}[lb]{\contourlength{1.2pt}\contour{hgrau}{\sz\blue $Q_1$}}
   \psfrag{S6^8}[ct]{\contourlength{1.2pt}\contour{hgrau}{\sz\red $S_{6}^{(8)}$}}
   \psfrag{S4^4}[ct]{\contourlength{1.2pt}\contour{hgrau}{\sz $S_{4}^{(4)}$}}
   \psfrag{S21^4}[lc]{\contourlength{1.2pt}\contour{hgrau}{\sz\red $S_{21}^{(4)}$}}
   \psfrag{e}[rb]{\gelbb $\boldsymbol{e}$}
   \psfrag{e2}[lt]{\red $\boldsymbol{e}^{(2)}$}
   \psfrag{e4}[ct]{$\boldsymbol{e}^{(4)}$}
   \psfrag{e8}[ct]{$\boldsymbol{e}^{(8)}$}
   \psfrag{E}[rc]{\blue $\mathcal E$}
   \psfrag{H1}[lt]{\blue $\mathcal H_1$}
   \psfrag{H1u}[cc]{\blue $\mathcal H_1$}
  \includegraphics[width=120mm]{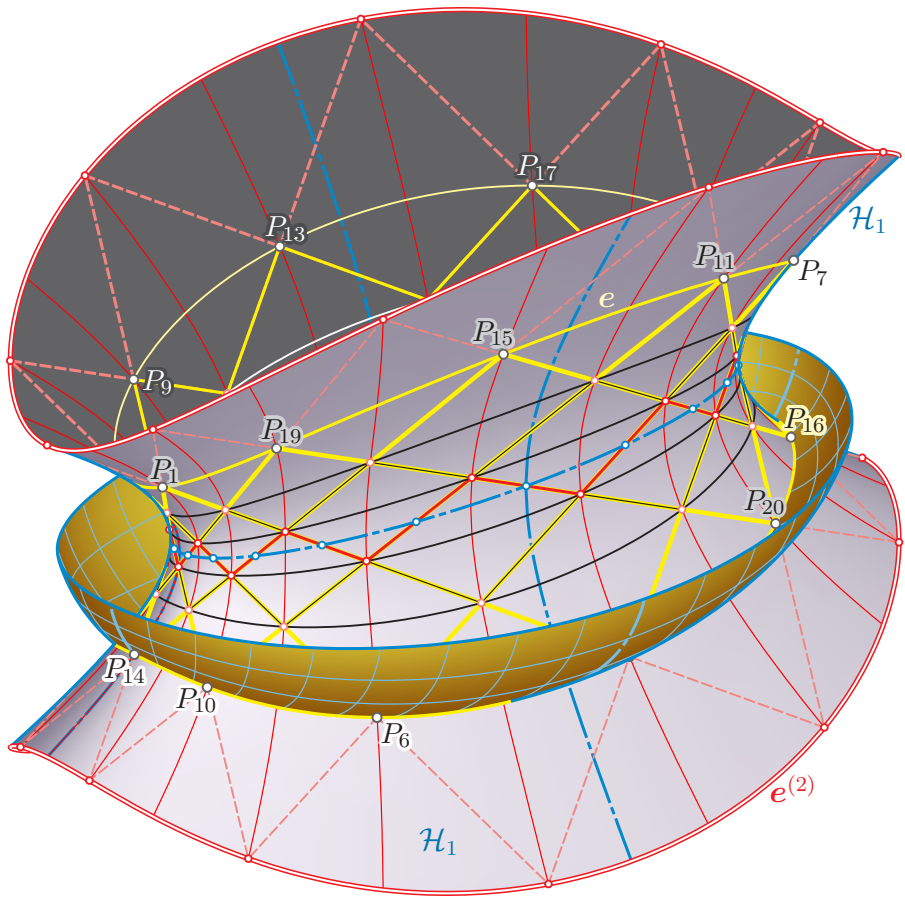} % bil_12.pas, elpsd_22.eps 
  \caption{Another pose of a periodic focal billiard in $\mathcal E$ along $e$ (yellow) with $N=22$ and turning number $\tau=5$ together with the spatial Poncelet grid.}
  \label{fig:bil_mit_Hyp2} 
\end{figure}

\begin{proof} (\Thmref{thm:isometric2})  
According to \eqref{eq:Bahn_Raumbil}, the coordinates of the billiards' vertices are continuous functions of $k_1$.
The variation of $k_1$ from $0$ to $-b_c\strqu$ shows the stated transition from $P_i'$ via $P_i$ to $P_i''$ for all $i\,$.
All trajectories are $k_1$-lines and therefore orthogonal to the confocal one-sheeted hyperboloids $k_1 = \mbox{const.}$ through the billiards' sides.
Hence for each side $P_iP_{i+1}$, the tangents to the trajectories of the endpoints are orthogonal to the side.
Consequently, the length of each side remains constant during this transition.
Moreover, each point attached to the moving side line $[P_i,P_{i+1}]$ remains on the same ellipsoid and two-sheeted hyperboloid in the confocal family, since only $k_1$ varies. 
\\
In particular, the contact point $Q_i'$ of the side $P_i'P_{i+1}'$ with the caustic $c'$ remains in the plane  $z=0$ and runs along a hyperbola. 
Similarly, the intersection point $T_i'$ of the side $P_i'P_{i+1}'$ with the $x$-axis remains in the plane $y=0$ and runs along an ellipse.
The end pose of this point is the contact point $T_i''$ of the side line $[P_i'',P_{i+1}'']$ with the caustic $c''$ (\Figref{fig:isometric2}).
\end{proof}

\begin{rem}
In each pose, the sides of the billiards can serve as rods of a flexible {\em Henrici hyperboloid} with spherical joints at all vertices and crossing points between sides (see \cite[p.~88]{Quadrics} or \cite[Fig.~17]{Barrow}).
Different to the standard model of this framework \cite{Henrici}, the endpoints of the rods are not located in planar sections but on closed lines of curvature on the confocal one-sheeted hyperboloid. 
If the axes of symmetry are fixed and the rods of the framework represent a periodic focal billiard, then during the flexion each crossing point runs along a line of curvature on a confocal ellipsoid or, in particular, on a hyperbola in the plane of the gorge ellipse or on an ellipse in the plane of the focal hyperbola.
\end{rem}

We conclude this section with a few consequences of \Lemref{lem:corr_points} and  
\Thmref{thm:isometric2} for any periodic focal billiard in the ellipsoid $\mathcal E$ and its flat limits with the focal ellipse $c'$ and the focal hyperbola $c''$ of $\mathcal E$ as respective caustics.

\begin{cor}\label{cor:L} % Cor.5
If there exist periodic focal billiards in the triaxial ellipsoid $\mathcal E$, then all focal billiards in $\mathcal E$ share the perimeter $L_{\mathcal E}\,$.  
\end{cor}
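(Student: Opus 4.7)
The plan is to reduce the statement to the classical perimeter invariance of periodic planar billiards in a fixed ellipse with a fixed confocal caustic, using Theorem 4 and Lemma 7 as the dictionary between the planar and the spatial settings.

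First I would set up the correspondence. By \Lemref{lem:focal_bill}, any focal billiard $P_1 P_2 \dots$ in $\mathcal E$ has its vertices on a line of curvature $e = \mathcal E \cap \mathcal H_1$ with $k_1 \in (-b_c\strqu,0)$. The axial scaling $\gamma(k_1, 0)$ of \eqref{eq:corr_points} sends this focal billiard to a planar polygon $P_1' P_2' \dots$ inscribed in $e' = \mathcal E \cap \{z=0\}$; since $\gamma(k_1,0)$ maps generators of $\mathcal H_1$ to tangents of the focal ellipse $c'$, the planar image is a billiard in $e'$ with caustic $c'$. Conversely, $\gamma(0,k_1)$ lifts every planar billiard in $e'$ with caustic $c'$ to a focal billiard in $\mathcal E$ along some $e = \mathcal E \cap \mathcal H_1$. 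By \Thmref{thm:isometric2}, the transition along $k_1$ is isometric; in particular the spatial focal billiard and its planar image share the length of each side, hence the perimeter.

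Next I would import the planar invariant. For periodic billiards in a fixed ellipse $e'$ with a fixed confocal caustic $c'$ the Poncelet closure theorem implies that if one such billiard closes after $N$ steps, then so does every such billiard, with the same $N$ and turning number; moreover the perimeter is independent of the initial vertex (it is one of the oldest known billiard motion invariants, cf.\ the references in Section~1). Under the hypothesis that some focal billiard in $\mathcal E$ is periodic, its $\gamma(k_1,0)$-image is a periodic planar billiard in $e'$ with caustic $c'$; by the classical result, every planar billiard in $e'$ with caustic $c'$ is periodic of the same period $N$ and has a common perimeter $L$. I set $L_{\mathcal E}:=L$.

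To conclude, I would traverse the correspondence in both directions: given any focal billiard in $\mathcal E$ with vertices on $e=\mathcal E\cap\mathcal H_1$, its $\gamma(k_1,0)$-image is a planar billiard in $e'$ with caustic $c'$, which is periodic of period $N$ and perimeter $L$; isometry transfers periodicity and perimeter back to the spatial billiard. Since $k_1 \in (-b_c\strqu,0)$ was arbitrary and the initial vertex on $e$ is an arbitrary point of the line of curvature (reached by varying the parameter $t$ in \eqref{eq:Bahn_Raumbil}, which is the image of the standard parametrization of $e'$ under $\gamma(0,k_1)$), every focal billiard in $\mathcal E$ is obtained this way, so all of them share the perimeter $L_{\mathcal E}$. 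The only delicate point is making sure that the hypothesis "there exist periodic focal billiards" injects into the planar Poncelet statement without loss of generality — this is immediate because isometry preserves closure in one period, and the planar closure then propagates to all starting vertices and all admissible $k_1$.
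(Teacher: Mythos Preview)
Your argument is correct and is precisely the route the paper intends: the corollary is stated without separate proof as a consequence of \Lemref{lem:corr_points} and \Thmref{thm:isometric2}, and your reduction via the singular scaling $\gamma(k_1,0)$ to the classical perimeter invariance of planar billiards in $e'$ with caustic $c'$ is exactly that deduction spelled out. The only slip is cosmetic: your opening line names ``Theorem~4 and Lemma~7'', but the results you actually invoke are \Lemref{lem:focal_bill}, \Lemref{lem:corr_points} (via \eqref{eq:corr_points}), and \Thmref{thm:isometric2}.
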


\begin{lem}\label{lem:corr_points_ell_coord} % Lem.6
If the planar billiard in $e'$ with the caustic $c'$ is $N$-periodic with turning number $\tau$, then all focal billiards in the ellipsoid $\mathcal E$ through $e'$ with the focal ellipse $c'$ are 
\begin{itemize}
\item for even $N$, again $N$-periodic with turning number $\tau$,
\item for odd $N$, $2N$-periodic with turning number $2\tau$.
\end{itemize}
\end{lem}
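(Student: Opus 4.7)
The strategy is to use \Thmref{thm:isometric2}: under the continuous deformation of the focal billiard, each vertex moves along a $k_1$-line on $\mathcal E$, which preserves the elliptic coordinates $k_0$ and $k_2$, and hence (via \eqref{eq:k_h}) the parameter $t$ in the standard parametrization \eqref{eq:Bahn_Raumbil}. Consequently, the focal billiard and the planar billiard in $e'$ share the same cyclic parameter sequence $t_1, t_2, \dots, t_N$, and the only additional data needed to specify a vertex of the focal billiard is which of the two loops of $e = \mathcal E \cap \mathcal H_1$ it lies on.

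I would next show that the vertices of the focal billiard alternate between the two loops. Each side is a generator of $\mathcal H_1$, and since $-b_c\strqu < k_1 < 0$, every horizontal section $\mathcal H_1 \cap \{z=z_0\}$ is a smooth ellipse containing no line (this covers $z_0 = 0$ as well). Hence no generator of $\mathcal H_1$ is parallel to or lies within the $[x,y]$-plane, and every generator meets $\{z=0\}$ transversally at exactly one point; consequently each side joins a point of the upper loop to a point of the lower loop.

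Starting with $P_1$ on, say, the upper loop, the vertex $P_{N+1}$ has parameter $t_1$ and lies on the upper loop when $N$ is even and on the lower loop when $N$ is odd. For even $N$, $P_{N+1} = P_1$, so the focal billiard has period $N$. For odd $N$, $P_{N+1} = \ol{P_1}$, the mirror of $P_1$ in the plane $z=0$; since $z \mapsto -z$ is a symmetry of the entire confocal family and hence of the billiard dynamics, the continuation is $\ol{P_2}, \ol{P_3}, \dots$, yielding $P_{2N+1} = \ol{\ol{P_1}} = P_1$ and period $2N$. In both cases, a strictly shorter focal period would project via $\gamma(k_1,0)$ to a strictly shorter planar period, contradicting the minimality of $N$. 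The turning number is then read off the $t$-sequence over one period: it is $\tau$ in the even-$N$ case (the period-$N$ sequence winds $\tau$ times) and $2\tau$ in the odd-$N$ case (the planar parameter sequence is traversed twice).

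I expect the main obstacle to be making the loop-alternation step fully rigorous: one has to exclude a generator of $\mathcal H_1$ being parallel to or lying in the $[x,y]$-plane, which in turn relies on the range $-b_c\strqu < k_1 < 0$ to guarantee that every horizontal section of $\mathcal H_1$ is a non-degenerate ellipse with no rectilinear components.
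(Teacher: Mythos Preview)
Your approach is correct and in fact more detailed than the paper's, which states this lemma without proof as a direct consequence of \Lemref{lem:corr_points} and \Thmref{thm:isometric2}. The reduction to the common parameter sequence $t_1,\dots,t_N$, the parity argument based on loop alternation, and the exclusion of a shorter focal period via the projection $\gamma(k_1,0)$ are all sound.

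There is one small gap worth closing in the alternation step. From ``every generator of $\mathcal H_1$ meets $\{z=0\}$ transversally in exactly one point'' it does not yet follow that the two vertices $P_i,P_{i+1}\in e$ lie on opposite loops: you still need that this crossing point lies \emph{between} $P_i$ and $P_{i+1}$ on the line. This is easy to add. The crossing point lies on the gorge ellipse of $\mathcal H_1$, whose semiaxes $a_{h_1}=\sqrt{a_c\strqu+k_1}$ and $b_{h_1}=\sqrt{b_c\strqu+k_1}$ are strictly smaller than $a_c',b_c'$ since $k_1<0$; hence it is strictly inside $c'$, hence strictly inside $e'=\mathcal E\cap\{z=0\}$, hence in the interior of the convex body $\mathcal E$. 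For a line meeting a convex body in two points, every interior point of the body on that line lies on the open chord, so the crossing is on the segment $P_iP_{i+1}$ and the endpoints are on opposite sides of $z=0$. Alternatively, invoke \Thmref{thm:isometric2} directly: the contact point $Q_i'\in c'$ lies strictly between $P_i'$ and $P_{i+1}'$, and the side-wise isometry sends it to a point $Q_i\in\{z=0\}$ strictly between $P_i$ and $P_{i+1}$.
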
 

\begin{lem}\label{lem:spat_bil_symm} % Lem.7
$N$-periodic focal billiards in an triaxial ellipsoid $\mathcal E$ are

\begin{enumerate}
\item[\em (i)] for $N\equiv 2\pmod 4$ and even $\tau$, symmetric w.r.t.\ the plane $z=0$ of the focal ellipse,
\item[\em (ii)] for $N\equiv 2\pmod 4$ and odd $\tau$, symmetric w.r.t.\ the center $O$ of $\mathcal E$,
\item[\em (iii)] for $N\equiv 0\pmod 4$ and odd $\tau$, axial symmetric w.r.t.\ the $z$-axis.
Focal billiards with $N\equiv 0\pmod 4$ and even $\tau$ split.
\end{enumerate}
\end{lem}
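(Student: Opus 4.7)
The plan is to deduce all four claims from the continuous transition of \Thmref{thm:isometric2}, which identifies the focal billiard in $\mathcal E$ with its flat limit, the planar elliptic billiard $P_1'P_2'\dots$ in the $[x,y]$-plane. Under this identification, every focal vertex $P_i$ is determined by a pair $(t_i,\epsilon_i)$, where $t_i\in[0,2\pi)$ is the standard parameter of the elliptic-limit vertex $P_i'=(a_e'\cos t_i,b_e'\sin t_i)$ and $\epsilon_i\in\{+,-\}$ selects the upper or lower loop of the line of curvature $e=\mathcal E\cap\mathcal H_1$ via the parametrization \eqref{eq:Bahn_Raumbil}. Because each side of the focal billiard is a generator of $\mathcal H_1$ that crosses from one loop of $e$ to the other, the signs alternate: $\epsilon_{i+1}=-\epsilon_i$.

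Let $N_0$ and $\tau_0$ denote the primitive period and turning number of the planar elliptic billiard, so that $\gcd(N_0,\tau_0)=1$. By \Lemref{lem:corr_points_ell_coord}, the primitive data of the focal billiard are $(N,\tau)=(N_0,\tau_0)$ when $N_0$ is even and $(N,\tau)=(2N_0,2\tau_0)$ when $N_0$ is odd. A short arithmetic check then determines $(N_0,\tau_0)$ from $(N,\tau)$ in each of the four residue classes in the lemma.

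For case~(i), $N\equiv 2\pmod 4$ and $\tau$ even, primitivity excludes $N_0=N$ (which would force both $\tau_0$ and $N_0$ even), so $N_0=N/2$ is odd. After $N/2$ steps the planar billiard has already closed, giving $t_{i+N/2}=t_i$, while $\epsilon_{i+N/2}=(-1)^{N_0}\epsilon_i=-\epsilon_i$; hence $P_{i+N/2}$ agrees with $P_i$ in $(x,y)$ but has opposite $z$, which is reflection in the plane $z=0$ of the focal ellipse. For cases~(ii) and~(iii), $N_0=N$ is even and $\tau_0=\tau$ is odd, so by Section~2.1 the planar elliptic billiard is centrally symmetric, i.e.\ $t_{i+N/2}\equiv t_i+\pi\pmod{2\pi}$, and \eqref{eq:Bahn_Raumbil} translates this into $(x_i,y_i)\mapsto(-x_i,-y_i)$. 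Meanwhile $\epsilon_{i+N/2}=(-1)^{N/2}\epsilon_i$, so for $N\equiv 2\pmod 4$ the $z$-coordinate flips as well, producing the central inversion of $\mathbb R^3$ (case~(ii)); for $N\equiv 0\pmod 4$ the $z$-coordinate is preserved, producing the $180^\circ$-rotation about the $z$-axis (case~(iii)). Finally, $N\equiv 0\pmod 4$ with $\tau$ even would require $N_0=N$ and $\tau_0=\tau$ both even, contradicting primitivity; therefore no primitive focal billiard has these parameters, and any $N$-periodic candidate splits into $d=\gcd(\tau,N)\ge 2$ shorter focal billiards.

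The only point that needs care is that the central symmetry of the planar elliptic billiard must be expressed as the shift $t\mapsto t+\pi$ in the same standard parameter appearing in \eqref{eq:Bahn_Raumbil}, not in the canonical integral parameter used elsewhere to describe billiard motions. This is unproblematic, since the $\mathbb Z_2$-action $P\mapsto -P$ on the ellipse is represented by the shift by $\pi$ in either parametrization, so the two descriptions of this particular symmetry agree.
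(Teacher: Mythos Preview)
Your argument is correct and is precisely the way one would flesh out the paper's own proof, which consists of the single line ``This follows from the relative position of opposite vertices $P_i$ and $P_{i+N/2}$.'' You make that relative position explicit via the parametrization \eqref{eq:Bahn_Raumbil}, the alternation of the loop-sign $\epsilon_i$, and the known central symmetry of the planar limit for even $N_0$ and odd $\tau_0$ from Section~2.1, together with \Lemref{lem:corr_points_ell_coord}; this is exactly the intended reasoning, only spelled out in full.
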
 

\begin{proof}
This follows from the relative position of opposite vertices $P_i$ and $P_{i+N/2}\,$.
\end{proof}

By virtue of \Lemref{lem:spat_bil_symm}, all periodic focal billiards in ellipsoids admit a symmetry which interchanges opposite vertices.
This has consequences for the flat limits.
Below, we use the symbol $\mathrm{ar}\,(A,B,C) = \ol{AC}/\ol{BC}$ (distances signed) for the {\em affine ratio} of three collinear points $A,B,C$.

\begin{cor}\label{cor:li/ri} % Cor.8
\begin{enumerate}% {\em (i)} 
\item If $P_1'P_2'\dots P_N'$ is an $N$-periodic billiard in the ellipse $e'$ with $Q_1', Q_2', \dots$ as contact points with the ellipse $c'$ as caustic, then 
\[  \mbox{for even $N$} \ 
    \prod_{i=1}^{N/2} \, \mathrm{ar}\,(P_i',P_{i+1}',Q_i') = (-1)^{N/2}, \ 
    \mbox{for odd $N$}\ \prod_{i=1}^N \, \mathrm{ar}\,(P_i',P_{i+1}',Q_i') = 1\,.
\]
%{\em (ii)} 
\item If $P_1''P_2''\dots P_N''$ is an $N$-periodic billiard with turning number $\tau$ in an ellipse $e''$ with $T_1'', T_2'', \dots$ as contact points with the hyperbola $c''$ as caustic and $P_i''\ne T_i''$ for all $i$, then 
\[  \prod_{i=1}^{N/2} \, \mathrm{ar}\,(P_i,P_{i+1},T_i) = (-1)^\tau.
\]
\end{enumerate}
\end{cor}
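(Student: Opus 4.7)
The plan is to lift both flat billiards to the spatial focal billiard $P_1P_2\dots$ in $\mathcal E$ constructed in \Thmref{thm:isometric2}, and then to exploit the symmetries recorded in \Lemref{lem:spat_bil_symm}. The key preliminary observation is that the isometric transition between any two poses of the focal billiard acts as an axial symmetry on every side line---this is the skew-isogram consequence of Ivory's theorem in the Remark after \Lemref{lem:corr_points}---and an axial symmetry preserves signed distances on its fixed line. Consequently the affine ratio $\mathrm{ar}(P_i,P_{i+1},X_i)$ is invariant throughout the transition whenever $X_i$ is transported rigidly with the side. Moreover, the proof of \Thmref{thm:isometric2} already shows that the point corresponding to $Q_i'$ stays in the plane $\{z=0\}$ for every $k_1$, and the point corresponding to $T_i''$ stays in the plane $\{y=0\}$. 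So in every pose one has $Q_i=[P_i,P_{i+1}]\cap\{z=0\}$ and $T_i=[P_i,P_{i+1}]\cap\{y=0\}$.

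The second ingredient is a one-line coordinate identity. If $A,B,C$ are collinear and $C_z=0$, parametrizing the line as $\Vkt p(\lambda)=A+\lambda(B-A)$ and solving $\Vkt p(\lambda)_z=0$ gives
\[
\mathrm{ar}(A,B,C)\;=\;\Frac{\lambda}{\lambda-1}\;=\;\Frac{A_z}{B_z},
\]
and analogously $\mathrm{ar}(A,B,C)=A_y/B_y$ when $C_y=0$. Applied to the spatial focal billiard this yields
\[
\mathrm{ar}(P_i,P_{i+1},Q_i)=\Frac{z_i}{z_{i+1}},\qquad \mathrm{ar}(P_i,P_{i+1},T_i)=\Frac{y_i}{y_{i+1}},
\]
so the products in the Corollary telescope into a single ratio, $z_1/z_{N/2+1}$ for part~(1) (even $N$) and $y_1/y_{N/2+1}$ for part~(2).

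To finish I would read off $(z_{N/2+1},y_{N/2+1})$ from whichever of the three symmetries in \Lemref{lem:spat_bil_symm} applies to the focal billiard. Case~(i), reflection in $\{z=0\}$, gives $z_{N/2+1}=-z_1$ and $y_{N/2+1}=y_1$; case~(ii), central symmetry in $O$, negates both coordinates; case~(iii), the half-turn about the $z$-axis, gives $z_{N/2+1}=z_1$ and $y_{N/2+1}=-y_1$. Matching the parities of $N/2$ and $\tau$ against the hypothesis of each case then produces exactly $(-1)^{N/2}$ for the product in part~(1) (even $N$) and $(-1)^\tau$ for the product in part~(2). The odd-$N$ statement in part~(1) goes through the same mechanism: by \Lemref{lem:corr_points_ell_coord} the associated spatial focal billiard is $2N$-periodic with the even turning number $2\tau$, so $2N\equiv 2\pmod 4$ and case~(i) applies with the index shift $N$ in place of $N/2$, and the product telescopes into $z_1/z_{N+1}$.

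The main obstacle, and the only non-routine point, is making the identification $Q_i=[P_i,P_{i+1}]\cap\{z=0\}$ (resp.\ $T_i=[P_i,P_{i+1}]\cap\{y=0\}$) rigorous throughout the whole continuous transition, so that the invariance-of-affine-ratio argument really applies all the way into both flat limits. Once that is granted, the rest is a telescoping product and a brief parity check---the serious geometry has already been absorbed into \Thmref{thm:isometric2} and \Lemref{lem:spat_bil_symm}.
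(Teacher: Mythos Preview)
Your approach is correct and essentially identical to the paper's: lift to the spatial focal billiard via \Thmref{thm:isometric2}, rewrite each affine ratio as $z_i/z_{i+1}$ (resp.\ $y_i/y_{i+1}$), telescope, and read off the sign from the symmetry in \Lemref{lem:spat_bil_symm}. Your write-up is in fact more careful than the paper's terse proof---you justify the invariance of the affine ratio along the transition, derive the coordinate identity explicitly, correctly telescope to $z_1/z_{N/2+1}$ (the paper writes $z_1/z_{N/2}$), and treat the odd-$N$ case via the $2N$-periodic lift, which the paper does not spell out.
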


\begin{proof}
Due to \Thmref{thm:isometric2}, we prove the statements for an isometric focal billiard $P_1P_2\dots P_N$ in an ellipsoid $\mathcal E$ in standard position, where $Q_i$ and $T_i$ are the intersection points of the sides $[P_i,P_{i+1}]$ with $z=0$ and $y=0$, respectively.
Hence 
\[  \mathrm{ar}\,(P_i',P_{i+1}',Q_i') = \mathrm{ar}\,(P_i,P_{i+1},Q_i) 
    = \frac{z_i}{z_{i+1}}, \quad 
    \mathrm{ar}\,(P_i'',P_{i+1}'',T_i'') = \mathrm{ar}\,(P_i,P_{i+1},T_i) 
    = \frac{y_i}{y_{i+1}}, 
\]
where $P_j = (x_j,y_j,z_j)$ for all $j$.
Thus, we obtain for the two products under consideration $z_1/z_{N/2}$ and $y_1/y_{N/2}$.
The claims follow from the symmetries as listed in \Lemref{lem:spat_bil_symm}.
\end{proof}

% -------------------------------------------------------------------------------
\section{The geometry of focal billiards in ellipsoids}
%       -----------------------------------------------

\subsection{Side lengths and angles}
%          -------------------------

\begin{lem}\label{lem:wkl_raum} % Lem.9
If the vertex $P\in e = \mathcal E\cap\mathcal H_1$ of the focal billiard has the elliptic coordinates $(k_0,k_1,k_2)$, then for the angle $\theta/2$ between the adjacent sides and the tangential plane to $\mathcal E$ at $P$ holds (note \Figref{fig:symm_Erz}) 
\begin{equation}\label{eq:wkl_raum}
  \tan\frac{\theta}2 = \pm \sqrt{\frac{k_0 - k_1}{k_1 - k_2}}\,,\quad
  \sin\frac{\theta}2 = \pm \sqrt{\frac{k_0 - k_1}{k_0 - k_2}}\,.
\end{equation}
\end{lem}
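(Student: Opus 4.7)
The plan is to decompose a unit direction vector $\Vkt v$ of a generator of $\mathcal H_1$ through $P$ in the orthonormal basis $\hat{\Vkt n}_0,\hat{\Vkt n}_1,\hat{\Vkt n}_2$ of unit normals to the three confocal quadrics at $P$, cf.\ \eqref{eq:n1n2n3}. Since the generator lies in the tangent plane to $\mathcal H_1$, the $\hat{\Vkt n}_1$-component of $\Vkt v$ vanishes, so $\Vkt v=\alpha\hat{\Vkt n}_0+\gamma\hat{\Vkt n}_2$ with $\alpha^2+\gamma^2=1$. Because the tangent plane to $\mathcal E$ at $P$ is spanned by $\hat{\Vkt n}_1$ and $\hat{\Vkt n}_2$, the angle $\theta/2$ between $\Vkt v$ and that plane obeys $\sin(\theta/2)=|\alpha|$ and $\cos(\theta/2)=|\gamma|$; the task therefore reduces to computing $\alpha^2/\gamma^2$.

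To exploit that the generator is a straight line on $\mathcal H_1$, I would impose the ruling condition. Writing $D_1(k)=a_c\strqu+k$, $D_2(k)=b_c\strqu+k$ and $D_3(k)=k$ for the denominators in \eqref{eq:confocal_Raum}, the condition that the line $P+t\Vkt v$ lies on $\mathcal H_1$ reads $\sum_{j=1}^3 v_j^2/D_j(k_1)=0$. Substituting $\Vkt v=(\alpha/\|\Vkt n_0\|)\Vkt n_0+(\gamma/\|\Vkt n_2\|)\Vkt n_2$ with $\Vkt n_i$ from \eqref{eq:n1n2n3} and expanding produces three sums
\[
A=\sum_j\frac{x_j^2}{D_j(k_0)^2 D_j(k_1)},\quad
B=\sum_j\frac{x_j^2}{D_j(k_0) D_j(k_1) D_j(k_2)},\quad
C=\sum_j\frac{x_j^2}{D_j(k_1) D_j(k_2)^2}
\]
with coefficients $\alpha^2/\|\Vkt n_0\|^2$, $2\alpha\gamma/(\|\Vkt n_0\|\|\Vkt n_2\|)$ and $\gamma^2/\|\Vkt n_2\|^2$, respectively.

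The heart of the argument is a partial-fraction expansion of each summand in the variable $A_j$, using $D_j(k_i)=A_j+k_i$, followed by collapse of each resulting sum via the three confocal identities $\sum_j x_j^2/D_j(k_i)=1$ implicit in \eqref{eq:elliptic-cartesian} and the squared-norm formula $\|\Vkt n_i\|^2=\sum_j x_j^2/D_j(k_i)^2$ (obtained by differentiating $\sum_j x_j^2/D_j(k)=1$). The cross-term vanishes, $B=0$, by the classical identity $\sum_{\text{cyc}}1/[(k_i-k_j)(k_i-k_l)]=0$ for three distinct values, while the double-pole residues give $A=\|\Vkt n_0\|^2/(k_1-k_0)$ and $C=\|\Vkt n_2\|^2/(k_1-k_2)$. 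The ruling condition therefore reduces to the clean relation
\[
\frac{\alpha^2}{k_1-k_0}+\frac{\gamma^2}{k_1-k_2}=0,
\]
i.e.\ $\tan^2(\theta/2)=\alpha^2/\gamma^2=(k_0-k_1)/(k_1-k_2)$. The second formula in \eqref{eq:wkl_raum} then follows from $\sin^2(\theta/2)=\tan^2(\theta/2)/(1+\tan^2(\theta/2))$ together with $(k_0-k_1)+(k_1-k_2)=k_0-k_2$.

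The principal obstacle is pure algebraic bookkeeping: checking that once the partial-fraction identities are combined with the confocal identities and the pairwise orthogonality $\Vkt n_i\cdot\Vkt n_j=\sum_l x_l^2/[D_l(k_i)D_l(k_j)]=0$ for $i\ne j$, every unwanted residue cancels and only the differences $k_1-k_0$, $k_1-k_2$ survive. The $\pm$ in \eqref{eq:wkl_raum} is intrinsic to the construction, since the two generators of $\mathcal H_1$ through $P$ correspond to $(\alpha,\gamma)$ and $(-\alpha,\gamma)$ and are interchanged by the reflection in the tangent plane to $\mathcal E$ noted just before \Defref{def:focal_bill}.
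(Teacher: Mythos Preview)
Your argument is correct and complete. It differs from the paper's proof in a meaningful way. The paper argues geometrically: the generators of $\mathcal H_1$ are its asymptotic lines, and their directions relative to the principal directions $\hat{\Vkt n}_0,\hat{\Vkt n}_2$ are governed by Euler's formula $\kappa_1\cos^2\psi+\kappa_2\sin^2\psi=0$; the ratio of the principal radii of curvature is then read off from a cited result (\cite[Lemma~8.2.1]{Quadrics}) identifying the centers of curvature with the poles of the tangent plane to $\mathcal H_1$ w.r.t.\ $\mathcal E$ and $\mathcal H_2$, which yields radii proportional to $k_0-k_1$ and $k_1-k_2$ and hence the decomposition \eqref{eq:Erzeugende}. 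Your route bypasses curvature theory entirely and imposes the quadratic ruling condition $\sum_j v_j^2/D_j(k_1)=0$ directly, collapsing the resulting sums by partial fractions against the three confocal identities. The trade-off is clear: the paper's proof is shorter and more conceptual but leans on an external lemma about curvature centers of confocal quadrics, while yours is self-contained and elementary at the cost of some algebraic bookkeeping. One small remark: your parenthetical ``obtained by differentiating $\sum_j x_j^2/D_j(k)=1$'' is a bit loose, since that equation holds only at the three roots $k_0,k_1,k_2$ rather than identically in $k$; the formula $\|\Vkt n_i\|^2=\sum_j x_j^2/D_j(k_i)^2$ you need follows simply from squaring \eqref{eq:n1n2n3}.
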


\begin{proof}
The principal curvature directions of $\mathcal H_1$ at $P$ are normal to the ellipsoid $\mathcal E$ and the two-sheeted hyperboloid $\mathcal H_2$ through $P$.
The poles of the tangent plane $\tau_P$ to $\mathcal H_1$ w.r.t.\ $\mathcal E$ and $\mathcal H_2$ are the centers of curvature of orthogonal sections of $\mathcal H_1$ through the principal curvature directions \cite[Lemma~8.2.1]{Quadrics}.
Therefore, the respective radii of curvature are proportional to $k_0-k_1$ and $k_1-k_2$.
\\
As a consequence, at the point $P$ the two generators of $\mathcal H_1$ have direction vectors
\begin{equation}\label{eq:Erzeugende}
  \sqrt{k_1-k_2}\ \frac{\Vkt n_2}{\Vert\Vkt n_2\Vert} 
    \pm \sqrt{k_0-k_1}\ \frac{\Vkt n_0}{\Vert\Vkt n_0\Vert}\,,
\end{equation}
when $\Vkt n_0$ is orthogonal to $\mathcal E$ and $\Vkt n_2$ orthogonal to $\mathcal H_2$.
This completes the proof. % proves the claim.
\end{proof}

\medskip
\begin{rem}
The curves of constant angle $\theta$ on $\mathcal H_1$ are the curves of constant ratio
of the elliptic coordinates relative to $\mathcal H_1$ or of the main curvatures.
These curves played a role in \cite{Sta_motion}.
The particular case $\theta=\pi/2$ yields the intersection curve with the director sphere
with radius $\sqrt{a_{h_1}^2 + b_{h_1}^2 - c_{h_1}^2}$ (see \cite{Wunderlich} or \cite[p.~46]{Quadrics}).  
\end{rem}

\medskip
At the limit $k_1 = 0$, \eqref{eq:wkl_raum} agrees with the formula
\[ \tan\frac{\theta(t)}2 = \pm \frac{\sqrt{k_0}}{\Vert \Vkt t_c(t)\Vert}
   = \pm \sqrt{\frac{k_0}{-k_2(t)}} 
\]   
in \cite[Lemma~2.2]{Sta_I}.
For $k_1 = -b_c\strqu$, i.e., for the hyperbolic billiard in the ellipse $e''$ follows 
\[  \tan\frac{\theta(t)}2 = \pm \sqrt{\frac{k_0 + b_c\strqu}{-b_c\strqu - k_2(t)}}
    = \pm \frac{b_e'}{d' \sin t}
     = \pm \sqrt{\frac{k_e''}{a_c\zwstrqu \sin^2 t'' - b_c\zwstrqu \cos^2 t''}}\,,
\]
where $t''$ denotes the standard parameter of the ellipse $e''$ with semiaxes $a_e'' = a_e$ and $b_e'' = c_e = k_e'\,$, i.e., by \eqref{eq:param hyp_bil}
\[  \cos t'' = \frac{d'}{a_c'}\,\cos t,   \quad 
    \sin t'' = \frac{\sqrt{a_c\strqu \sin^2 t + b_c\strqu \cos^2 t}}{a_c'}\,. 
\]    

According to \eqref{eq:n1n2n3}, the normal vector of the ellipsoid along the curve $e = \mathcal E\cap \mathcal H_1$ with the parametrization $\Vkt e(t)$ from \eqref{eq:Bahn_Raumbil} is
\[  \Vkt n_0 = \left( \frac{x(t)}{a_e^2}\,, \ \frac{y(t)}{b_e^2}\,, \ 
    \frac{z(t)}{c_e^2}\right) =
    \left( \frac{a_{h_1}}{a_c' a_e}\cos t, \ \frac{b_{h_1}}{b_c' b_e}\sin t, \ 
    \frac{\sqrt{k_1k_2}}{a_c' b_c' \sqrt{k_0}}\right).
\]
This implies with \eqref{eq:k0,k2}
  \def\arraycolsep{1.0mm}
\[ \begin{array}{rcl}
     \Vert \Vkt n_0 \Vert^2 &= &\Frac{a_c\strqu + k_1}{a_c\strqu a_e^2}\cos^2 t
    + \Frac{b_c\strqu + k_1}{b_c\strqu b_e^2}\sin^2 t 
    - \Frac{k_1(a_c\strqu \sin^2 t + b_c\strqu \cos^2 t)}{a_c\strqu b_c\strqu k_0}
   \\[4.0mm]
%    &= & \Frac{b_e^2\cos^2 t + a_e^2\sin^2 t}{a_e^2 b_e^2} + k_1 \left(
%    \Frac{k_0 - a_e^2}{a_c\strqu a_e^2 k_0}\,\cos^2 t + 
%    \Frac{k_0 - b_e^2}{b_c\strqu b_e^2 k_0}\,\sin^2 t\right)
%    \\[4.0mm]
   &= & \Frac{(b_c\strqu \cos^2 t + a_c\strqu \sin^2 t + k_0)(k_0 - k_1)}{a_e^2 b_e^2 k_0}
    = \Frac{(k_0-k_1)(k_0-k_2)}{a_e^2 b_e^2 k_0}\,.
  \end{array}
\]
% Note that the denominator contains the squares of the three semiaxes of $\mathcal E$.
Similar computations for the other normal vectors result in
\begin{equation}\label{eq:gradients}
 \begin{array}{c}
   \Vert\Vkt n_0\Vert = \Frac{\sqrt{(k_0-k_1)(k_0-k_2)}}{a_e b_e c_e}\,, \quad
   \Vert\Vkt n_1\Vert = \Frac{\sqrt{(k_0-k_1)(k_1-k_2)}}{a_{h_1} b_{h_1} c_{h_1}}\,, 
   \\[3.0mm]
   \Vert\Vkt n_2\Vert = \Frac{\sqrt{(k_0-k_2)(k_1-k_2)}}
    {a_{h_2} b_{h_2} c_{h_2}}\,, 
 \end{array}
\end{equation}
where
\begin{equation}\label{eq:ah2_bh2_ch2}
  a_{h_2} = \sqrt{a_c\strqu + k_2},\quad b_{h_2} = \sqrt{-(b_c\strqu + k_2)},\quad
    c_{h_2} = \sqrt{-k_2}
\end{equation}
are the semiaxes of the two-sheeted hyperboloid $\mathcal H_2$ which passes through the point with position vector $\Vkt e(t)$.
By virtue of \eqref{eq:gradients}, the direction vectors \eqref{eq:Erzeugende} of the generators on $\mathcal H_1$ can be expressed as
\begin{equation}\label{eq:Erzeugende2}
  a_e b_e c_e\,\Vkt n_0 \pm a_{h_2} b_{h_2} c_{h_2}\,\Vkt n_2\,. 
\end{equation}
% (compare with a planar version in \cite[Prop.~8.3]{Bobenko}) 
% \textcolor{red}{notwendig?}.

\medskip
Based on the orientation of the normalvector $\Vkt n_1$ of $\mathcal H_1$ in \eqref{eq:n1n2n3} pointing outwards, we can also orientate the angles $\theta_i$ along $e$.
We stick to a counter-clockwise order of the initial billiard in $e'$ and notice that the angles $\theta_i$ change their sign from vertex to vertex in \eqref{eq:wkl_raum}.

The tangent vector to $e = \mathcal E \cap\mathcal H_1$ from \eqref{eq:Bahn_Raumbil} is
\[  \Vkt t_e(t):= \frac{\mathrm d}{\mathrm d t}\Vkt e(t) = \left(
    \frac{-a_e a_{h_1}\sin t}{a_c'}, \ \frac{b_e b_{h_1}\cos t}{b_c'}, \
    \pm\frac{\sqrt{-k_0k_1}\,\sqrt{-(a_c\strqu + k_2)(b_c\strqu + k_2)}}{a_c' b_c' \sqrt{-k_2}} \right).      
\]
This shows that
\[  \Vkt t_e(t) = \lambda \Vkt n_2 \quad\mbox{with}\quad
    \lambda = -d\strqu \sin t \cos t = \sqrt{-(a\strqu + k_2)(b\strqu + k_2)} 
    = a_{h_2} b_{h_2}\,.  
\]
We find \def\arraycolsep{0.8mm}
\[ \begin{array}{rcl}
     \Vert \Vkt t_e(t)\Vert^2
    &= &\Frac{a_e^2(a_c\strqu + k_1)\sin^2 t}{a_c\strqu} 
       + \Frac{b_e^2(b_c\strqu + k_1)\cos^2 t}{b_c\strqu}
       - \Frac{k_0k_1(a_c\strqu + k_2)(b_c\strqu + k_2)}{a_c\strqu b_c\strqu k_2}
     \\[3.5mm]
%     &= &-k_2 + k_0 + k_1 \Frac{a_c\strqu b_c\strqu + k_0(a_c\strqu \cos^2 t + b_c\strqu \sin^2 t)}{a_c\strqu b_c\strqu} 
%        - \Frac{k_0k_1(a_c\strqu + k_2)(b_c\strqu + k_2)}{a_c\strqu b_c\strqu k_2}
%      \\[3.5mm]
    &= &-k_2 + k_0 + k_1 - \Frac{k_0 k_1}{k_2} = \Frac{(k_2-k_0)(k_1-k_2)}{k_2}\,,
  \end{array}
\]  
hence, in accordance with the third equation in \eqref{eq:gradients},
\begin{equation}\label{eq:|te|Raum}
  \Vert \Vkt t_e(t)\Vert = \frac{\sqrt{(k_0-k_2(t))(k_1-k_2(t))}}{\sqrt{-k_2(t)}}\,.
\end{equation}   
By virtue of \eqref{eq:k_e minus k_h}, this is also valid for $k_1 = 0\,$.
  
\begin{lem}\label{lem:Joachim_Raum} % Lem.10
Let $e$ be the curve of intersection between the ellipsoid $\mathcal E$ with $k = k_0$ and semiaxes $(a_e,b_e,c_e)$ and the one-sheeted hyperboloid $\mathcal H_1$ with $k = k_1\,$. 
Then the  Joachimsthal integral for focal billiards along $e$ equals
\[  J_e = \frac{k_0 - k_1}{a_e b_e c_e}\,.
\]
\end{lem}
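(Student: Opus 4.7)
The plan is to recognize $J_e$ in its classical form as $\Vert\Vkt n_0\Vert\sin(\theta/2)$, the product of the length of the ellipsoid's gradient at a vertex and the sine of the angle between an adjacent side and the tangent plane $\tau_P\mathcal E$ at that vertex. With $A$ the symmetric matrix whose quadratic form defines $\mathcal E$, so that $A\Vkt x = \Vkt n_0$, this is the quantity $|\langle A\Vkt x,\Vkt v\rangle|$ for a unit side direction $\Vkt v$; its invariance under the billiard reflection is the classical Joachimsthal property, since reflecting $\Vkt v$ in $\tau_P\mathcal E$ only flips the sign of its $\Vkt n_0$-component while fixing $\Vkt n_0$ itself.

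With this interpretation, the verification is a short substitution. Squaring and combining the formula for $\sin(\theta/2)$ from \Lemref{lem:wkl_raum} with the expression for $\Vert \Vkt n_0\Vert$ from \eqref{eq:gradients} gives
\[
\Vert\Vkt n_0\Vert^{2}\,\sin^{2}\!\frac{\theta}{2}
 \;=\; \frac{(k_0-k_1)(k_0-k_2)}{a_e^{2}b_e^{2}c_e^{2}}\cdot\frac{k_0-k_1}{k_0-k_2}
 \;=\; \frac{(k_0-k_1)^{2}}{a_e^{2}b_e^{2}c_e^{2}},
\]
and since the confocal ordering \eqref{eq:confoc_interval} ensures $k_0>0>k_1$, i.e.\ $k_0-k_1>0$, extracting the positive square root yields the asserted value of $J_e$.

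The one thing that needs to be flagged, rather than proved in a hard sense, is the internal consistency: a focal billiard has all vertices on the same line of curvature $e=\mathcal E\cap\mathcal H_1$, so $k_0$ and $k_1$ are common to all vertices while only $k_2$ changes from vertex to vertex. The cancellation of $k_0-k_2$ in the computation above is therefore not a coincidence; it is precisely what is required for $\Vert\Vkt n_0\Vert\sin(\theta/2)$ to be vertex-independent, and so the Joachimsthal invariance of the focal billiard is recovered as a by-product of the same identity. I do not foresee a genuine obstacle — once the two formulas in \Lemref{lem:wkl_raum} and \eqref{eq:gradients} are in hand, the proof is a one-line algebraic simplification.
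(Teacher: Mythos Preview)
Your proposal is correct and follows essentially the same route as the paper: identify $J_e=\Vert\Vkt n_0\Vert\sin(\theta/2)$ and substitute the expressions from \Lemref{lem:wkl_raum} and \eqref{eq:gradients}, whereupon the factor $k_0-k_2$ cancels. The only cosmetic difference is that you square first and then extract a root, while the paper multiplies the square-root expressions directly; your added remark on the vertex-independence via the $k_0-k_2$ cancellation is a nice touch but not in the original.
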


\begin{proof}
The definition of the Joachimsthal integral in \cite[p.~3]{Ako-Tab} yields for $e = \mathcal E\cap\mathcal H_1$  
\[  J_e:= -\langle \Vkt u_i,\,\Vkt n_{0|i} \rangle,
\]
where $\Vkt u_i$ as the unit vector of the directed side $P_iP_{i+1}$ and $\Vkt n_{0|i}$ the normalvector of $\mathcal E$ at $P_i$ (cf.\ \cite[Figure~3]{Sta_I}).
Thus, we obtain for focal billiards by virtue of \eqref{eq:wkl_raum}
\[  J_e = \Vert \Vkt n_0\Vert\,\sin\frac{\theta}2
    = \frac{\sqrt{(k_0-k_1)(k_0-k_2)}}{a_e b_e \sqrt{k_0}}\,
    \sqrt{\frac{k_0-k_1}{k_0-k_2}}\,,
\]
which confirms the claim.     
\end{proof}

The expression in \Lemref{lem:Joachim_Raum} yields for the planar billiard with an elliptic caustic, i.e., for $k_1 = 0$, the statement of \cite[Lemma~3.4]{Sta_I}. %\Lemref{lem:Tab_invariant}.
A similar result follows for the other planar limit $k_1 = -b_c^2$, namely for the hyperbolic billiard in $e''$ as  
\[  J_{e''} = \frac{b_e}{a_e \sqrt{k_e}} = \frac{\sqrt{k_e''}}{a_e'' b_e''}\,.
\]

% ................................................................................
\subsection{Periodic focal billiards in ellipsoids}
%          ........................................ 

The focal billiards in the triaxial ellipsoid $\mathcal E$ (semiaxes $a_e,b_e,c_e$ with $a_e > b_e > c_e$) are periodic if and only if the isometric billiards in the ellipse $e'$ (semiaxes $a_e,b_e$) with the focal ellipse $c'$ (semiaxes $a_c = \sqrt{a_e^2-c_e^2}$ and $b_c = \sqrt{b_e^2-c_e^2}\,)$ as caustic are periodic.
We just recall from \cite[Theorem~2]{Sta_II} that, in terms of Jacobian elliptic functions to the modulus $m = d/a_c$ equal to the numerical eccentricity of $c'$, the periodicity is equivalent to a rational quotient
\begin{equation}\label{eq:rational}
  \frac K{\varDelta\mskip 1mu\wt u} \zwi\mbox{with}\zwi 
  \Cn\varDelta\mskip 1mu \wt u = \frac{b_c}{b_e}\zwi \mbox{and}\zwi 
  K = \int_0^{\pi/2} \frac{\mathrm d\varphi}{\sqrt{1 - m^2\sin^2\varphi}}\,. 
\end{equation}

Given any ellipsoid $\mathcal E$, there exists a two-dimensional set of inscribed focal billiards.
Each planar billiard in $e'$ determines by variation of $k_1$ a one-parameter set of isometric focal billiards inscribed respectively in lines of curvature $e\,$.
While the perimeter $L_{\mathcal E}$ and even the side lengths remain constant, the Joachimsthal integral $J_e$  depends on $k_1\,$.

On the other hand, the {\em billiard motion} along the planar billiard in $e'$ induces billiard motions along each single $e\,$, and this time $J_e$ remains constant, while angles and side lengths vary.
For the particular case of periodic billiards, it is natural to check which of the around 80 invariants as listed in \cite{80} have spatial counterparts at focal billiards.
We pick out a few of them.

We continue with the spatial analogue of a theorem proved in \cite[Theorem~4]{Ako-Tab}, that was a consequence of experiments carried out by D.\ Reznik \cite{80}.

\begin{thm}\label{lem:sum_cos_spatial} % Thm.11
Let the focal billiards of the triaxial ellipsoid $\mathcal E$ with semiaxes $a_e,b_e,c_e$ be $N$-periodic.
Then for the billiards inscribed in the line of curvature $e = \mathcal E\cap\mathcal H_1$ with the constant elliptic coordinates $k_0$ and $k_1$, the sum of cosines of the exterior angles $\theta_i$ equals 
\[   \sum_{i=1}^N \cos\theta_i = N - J_e\,L_{\mathcal E} 
     = N - \frac{k_0-k_1}{a_eb_ec_e}\,L_{\mathcal E}
\]
with $J_e$ as Joachimsthal integral of $e$ and $L_{\mathcal E}$ as common perimeter of the focal billiards of $\mathcal E\,$.
\end{thm}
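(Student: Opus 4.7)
The plan is to reduce the claim to the identity
\[
  J_e\,L_{\mathcal E} \;=\; 2\sum_{i=1}^N \sin^2(\theta_i/2),
\]
since then $\sum_i\cos\theta_i = \sum_i\bigl(1-2\sin^2(\theta_i/2)\bigr) = N - J_eL_{\mathcal E}$, and the value $J_e = (k_0-k_1)/(a_eb_ec_e)$ from \Lemref{lem:Joachim_Raum} produces the stated formula.

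First I would record the billiard reflection law at each vertex in vector form. Let $\Vkt u_i$ denote the unit direction vector along the directed side $P_iP_{i+1}$ and let $\Vkt N_i := \Vkt n_{0|i}/\Vert\Vkt n_{0|i}\Vert$ be the outward unit normal of $\mathcal E$ at $P_i$. By \Lemref{lem:wkl_raum}, the arriving direction $\Vkt u_{i-1}$ and the departing direction $\Vkt u_i$ share the same tangential component at $P_i$ and have opposite normal components of absolute value $\sin(\theta_i/2)\ge 0$; the arriving one points outward, the departing one inward, so
\[
  \Vkt u_{i-1} - \Vkt u_i \;=\; 2\sin(\theta_i/2)\,\Vkt N_i.
\]

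Second, I would evaluate the periodic sum $\Sigma := \sum_i \langle P_i,\,\Vkt u_{i-1} - \Vkt u_i\rangle$ in two ways. Reindexing $\sum_i\langle P_i,\Vkt u_{i-1}\rangle = \sum_i\langle P_{i+1},\Vkt u_i\rangle$ and using $P_{i+1}-P_i = \ell_i\Vkt u_i$ with $\ell_i := \ol{P_iP_{i+1}}$ gives
\[
  \Sigma \;=\; \sum_i\langle P_{i+1}-P_i,\,\Vkt u_i\rangle \;=\; \sum_i\ell_i \;=\; L_{\mathcal E},
\]
while the reflection identity yields the alternative expression
\[
  \Sigma \;=\; 2\sum_i \sin(\theta_i/2)\,\langle P_i,\,\Vkt N_i\rangle.
\]

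Third, since $P_i\in\mathcal E$ and the vector $\Vkt n_{0|i}$ in \eqref{eq:n1n2n3} is one half of the gradient of the defining polynomial of $\mathcal E$, direct substitution of $P_i$ into the equation of $\mathcal E$ gives $\langle P_i,\Vkt n_{0|i}\rangle = 1$, whence $\langle P_i,\Vkt N_i\rangle = 1/\Vert\Vkt n_{0|i}\Vert$. Combined with the Joachimsthal identity $\Vert\Vkt n_{0|i}\Vert\sin(\theta_i/2) = J_e$ established in the proof of \Lemref{lem:Joachim_Raum}, the reflection-based expression for $\Sigma$ becomes
\[
  L_{\mathcal E} \;=\; 2\sum_i \frac{\sin(\theta_i/2)}{\Vert\Vkt n_{0|i}\Vert} \;=\; \frac{2}{J_e}\sum_i \sin^2(\theta_i/2),
\]
which is precisely the required identity.

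The only delicate point is sign bookkeeping: one has to check that the positive choice of $\sin(\theta_i/2)$ used in the reflection decomposition agrees with the one appearing in Joachimsthal, and that $\Vkt N_i$ is oriented outwards so that $\langle P_i,\Vkt N_i\rangle>0$. Beyond that, the argument is purely algebraic and it specialises unchanged to the two planar limits $k_1=0$ and $k_1=-b_c\strqu$, recovering the corresponding formulae for planar elliptic and hyperbolic billiards.
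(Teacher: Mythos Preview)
Your proof is correct and follows essentially the same route as the paper's own argument (which in turn cites the planar version in Akopyan--Schwartz--Tabachnikov): compute the perimeter as $\sum_i\langle P_{i+1}-P_i,\Vkt u_i\rangle$, reindex to $\sum_i\langle P_i,\Vkt u_{i-1}-\Vkt u_i\rangle$, insert the reflection law $\Vkt u_{i-1}-\Vkt u_i = 2\sin(\theta_i/2)\,\Vkt n_{0|i}/\Vert\Vkt n_{0|i}\Vert$, use $\langle P_i,\Vkt n_{0|i}\rangle=1$, and finish with $J_e=\Vert\Vkt n_{0|i}\Vert\sin(\theta_i/2)$. The only cosmetic difference is that you announce the target identity $J_eL_{\mathcal E}=2\sum_i\sin^2(\theta_i/2)$ first and then verify it, whereas the paper simply computes $L_{\mathcal E}$ forward; the ingredients and logic are identical.
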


\begin{proof} % (\Thmref{thm:Ako-Tab}). 
In this proof we follow exactly the lines of the proof for the planar version in \cite{Ako-Tab}:
We compute the perimeter of the focal billiards in $\mathcal E$ as
\[ \begin{array}{rcl}
   L_{\mathcal E} &= &\Sum_{i=1}^N \langle(\Vkt p_{i+1} - \Vkt p_i),\,\Vkt u_i\rangle
      = \Sum_{i=1}^N \langle \Vkt p_{i+1},\,\Vkt u_i\rangle
        - \Sum_{i=1}^N \langle \Vkt p_i,\,\Vkt u_i\rangle
   \\[2.5mm]
   &= &\Sum_{i=1}^N \langle\Vkt p_i,\,\Vkt u_{i-1}\rangle 
      - \Sum_{i=1}^N \langle\Vkt p_i,\,\Vkt u_i\rangle    
       = \Sum_{i=1}^N \langle\Vkt p_i,\,(\Vkt u_{i-1} - \Vkt u_i)\rangle.
   \end{array}
\]
The signed angle between the unit vectors $\Vkt u_{i-1}$ and $\Vkt u_i$ equals $\theta_i$ (note \Figref{fig:symm_Erz}), and the difference vector $(\Vkt u_{i-1} - \Vkt u_i)$ has the direction of the normal vector $\Vkt n_{0|i}$ at $\Vkt p_i$ to $\mathcal E$, hence
\[  \Vkt u_{i-1} - \Vkt u_i = 2\sin\frac{\theta_i}2\,\frac 1{\Vert\Vkt n_{0|i}\Vert}
     \,\Vkt n_{0|i}\,, \ \mbox{where} \ \langle\Vkt p_i,\,\Vkt n_{0|i}\rangle = 1
\]
due to the equation of $\mathcal E$.
On the other hand, the Joachimsthal integral of $e=\mathcal E\cap\mathcal H_1$ equals  
\[  J_e = -\langle\Vkt u_i,\,\Vkt n_{0|i}\rangle = \Vert\Vkt n_{0|i}\Vert \sin\frac{\theta_i}2
\]
for all $i$, which results in
\[  J_e L_{\mathcal E} = \sum_{i=1}^N  2\sin^2 \frac{\theta_i}2
    % \,\frac{\Vert\Vkt n_{0|i}\Vert}{\Vert\Vkt n_{0|i}\Vert}
    = \sum_{i=1}^N (1 - \cos\theta_i).
\]
This completes the proof.  
\end{proof}

Now we transfer two results on elliptic billiards (note \cite[Theorems~8 and 9]{Sta_II}) to the isometric focal billiards .
The second result is a refinement of \Corref{cor:li/ri}.
       
\begin{thm}\label{thm:prod_li}
Let $P_1P_2\dots P_N$ be an $N$-periodic focal billiard in the triaxial ellipsoid $\mathcal E$ with the elliptic coordinate $k_0 = c_e^2\,$.
If the sides $[P_1,P_2], \,[P_2,P_3],\dots$ intersect the plane $z=0$ through the focal ellipse $c'$ in the points $Q_1, Q_2, \dots$, % and the plane $y=0$ through the focal hyperbola $c''$ in the points $T _1,T_2,\dots$ 
then the distances $r_i:= \ol{Q_{i-1}P_i}$ and $l_i:= \ol{P_iQ_i}$, $i = 1,\dots,N$, satisfy  
\[ \left. \begin{array}{rl}
    \mbox{for}\zwi N = 4n, \ n\in\NN:\zwi        &r_i\cdot r_{i+n} = l_i \cdot l_{i+n} 
    \\[1.0mm]
    \mbox{for}\zwi N = 4n\!+\!2, \ n\in\NN:\zwi  &r_i\cdot l_{i+n} = l_i \cdot r_{i+n+1}
  \end{array} \right\} = k_e
\]
and
\[  \prod_{i=1}^N \,l_i = \prod_{i=1}^N \,r_i = k_0^{N/2}, \zwi\mbox{and for}\zwi    N\equiv 0\mskip -10mu\pmod 4 \zwi\mbox{even}\zwi
    \prod_{i=1}^{N/2} \,l_i = \prod_{i=1}^{N/2} \,r_i = k_0^{N/4}.
\]
% \item Moreover, if all points $T_i$ are finite and different from any vertex $P_j\,$, then
% \[  \prod_{i=1}^N \,\ol{T_{i-1} P_i} = \prod_{i=1}^N \,\ol{P_i T_i} = ??? 
% \]
These formulas are also valid for the hyperbolic billiard $P_1''P_2''\dots P_N''$ inscribed in $e''$.
\end{thm}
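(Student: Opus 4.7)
The plan is to reduce the claims to the corresponding statements for the planar elliptic billiard $P_1'P_2'\dots P_N'$ in $e'$ with caustic $c'$ and $k_0 = k_e'$, which are the content of \cite[Theorems~8 and~9]{Sta_II}. The bridge between the planar and the spatial versions is that the numbers $l_i$ and $r_i$ are in fact invariants of the whole one-parameter family of isometric focal billiards produced by \Thmref{thm:isometric2}, so the spatial identity and the planar identity are literally the same equation.

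To prove the invariance, recall from \eqref{eq:Bahn_Raumbil} that $z(t)^2 = k_0 k_1 k_2(t)/(a_c\strqu b_c\strqu)$ along $e = \mathcal E \cap \mathcal H_1$. Consecutive vertices lie on opposite loops of $e$, so the side $[P_i,P_{i+1}]$ meets the plane $z=0$ at $Q_i$ in the ratio
\[
   \frac{\overline{P_iQ_i}}{\overline{P_iP_{i+1}}} \;=\; \frac{|z_i|}{|z_i|+|z_{i+1}|}
   \;=\; \frac{\sqrt{-k_2(t_i)}}{\sqrt{-k_2(t_i)}+\sqrt{-k_2(t_{i+1})}},
\]
after the common factor $\sqrt{-k_1}$ is cancelled. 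The side length $\overline{P_iP_{i+1}}$ is $k_1$-invariant by \Thmref{thm:isometric2}, and the parameters $t_i$ are preserved by the scalings $\gamma(k_1,k_1\iso)$ in \eqref{eq:corr_points}. Hence $l_i$ and $r_i$ take the same numerical value in every member of the isometric family. In the flat limit $k_1 \to 0$, the point $Q_i$ coincides with the contact point $Q_i'$ of $[P_i',P_{i+1}']$ with $c'$, as already noted inside the proof of \Thmref{thm:isometric2} (``the contact point $Q_i'$ remains in the plane $z=0$ and runs along a hyperbola''), so $l_i = l_i'$ and $r_i = r_i'$ throughout the family.

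With this identification in hand, the claimed pairwise identities for $N=4n$ and $N=4n+2$, as well as the product formulas $\prod l_i = \prod r_i = k_0^{N/2}$ (and its refinement for $N\equiv 0 \pmod 4$), are just the planar statements of \cite[Theorems~8 and~9]{Sta_II} read off in the isometric focal billiard, with $k_e = k_0$. The same argument applied at the opposite flat limit $k_1 \to -b_c\strqu$ yields the assertions for the hyperbolic billiard $P_1''\dots P_N''$, $Q_i$ now becoming the intersection of $[P_i'',P_{i+1}'']$ with the principal axis as in item~4 of \Thmref{thm:isometric}. The one delicate step in the whole plan is the $\sqrt{-k_1}$ cancellation in the ratio above, together with the verification that $Q_i$ is in fact identified with the planar contact point $Q_i'$ rather than some other point on the side; once this is secured, no computation beyond invoking the planar theorems and the symmetry grouping of pairs $(i,i+n)$ from \Lemref{lem:spat_bil_symm} is required.
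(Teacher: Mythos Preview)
Your approach is correct and is essentially the same as the paper's: the paper does not give a separate proof but simply announces the transfer of \cite[Theorems~8 and~9]{Sta_II} via the isometry of \Thmref{thm:isometric2}, relying on the fact established in that proof that the contact point $Q_i'$ ``remains in the plane $z=0$'' under the variation of $k_1$, so that $l_i$ and $r_i$ coincide with their planar values $l_i'$, $r_i'$. Your explicit ratio computation with the cancellation of $\sqrt{-k_1}$ is a welcome concrete verification of this invariance, but it is the same argument rather than a different one.
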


\begin{rem} 
Similar results can be expected for the intersection points $T _1,T_2,\dots, T_N$ of the billiards' sides with the plane $y=0$ through the focal hyperbola $c''$.
\end{rem}

% ........................................................................
\subsection{Focal billiards and elliptic functions}
%          ........................................

As shown in \cite[Theorem~1]{Sta_II} for elliptic billiards in $e'$, there is a one-parameter Liegroup of transformations which act not only on $e'$ but on all points of the associated Poncelet grid. 
It preserves confocal ellipses and permutes confocal hyperbolas and tangents of the caustic $c'$.
Hence, it induces billiard motions of all billiards with the common caustic $c'$.
A generating infinitesimal transformation is defined by a field of velocity vectors for all points in the exterior of $c'$ as
\[  P = (a_e\cos t,\,b_e\sin t) \,\mapsto\, \Vkt v_P = 
     \Vert\Vkt t_c(t)\Vert\,\Vkt t_e(t) =
     \sqrt{-k_h(t)}\,(-a_e\sin t,\,b_e\cos t).
\]     
If we see the velocity vectors as derivations of the position vectors by a parameter $u$ and use a dot for indicating this differentiation, then we obtain for the standard parameter $t$ and for the elliptic coordinates $(k_e,k_h)$ of $P$ in the plane $z=0$
\begin{equation}\label{eq:dot}
  \dot t = \sqrt{-k_h(t)}, \quad \dot k_e = 0,\quad
    \dot k_h = -2\,\sqrt{k_h(a_c\strqu + k_h)(b_c\strqu + k_h)}\,.
\end{equation}

By virtue of \Lemref{lem:corr_points}, we extend the field of velocity vectors to the space by differentiating $\Vkt e(t)$ in \eqref{eq:Bahn_Raumbil} and taking \eqref{eq:dot} into account.
In terms of spatial elliptic coordinates we obtain the assignment
\[  (k_0,k_1,k_2) \,\mapsto\, (\dot k_0,\,\dot k_1,\,\dot k_2)
    = \left(0,\,0,\,-2\sqrt{k_2(a_c\strqu + k_2)(b_c\strqu + k_2)}\,\right).
\]
The corresponding transformations preserve ellipsoids and one-sheeted hyperboloids in the confocal family, while they permute two-sheeted hyperboloids and the generators in each regulus of one-sheeted hyperboloids.
Thus, it induces billiard motions on all one-sheeted hyperboloids.

The parameter $u$ of the Liegroup in the plane as well as in space is {\em canonical} in the sense of \cite{Izmestiev}.
This means that on $e'$ the billiard transformation from one vertex to the next one corresponds to a shift of the respective canonical coordinates $u_i \mapsto u_{i+1} = u_i + 2\mskip 2mu\varDelta\mskip 1mu u\,$.
After the transference, the same is true for all focal billiards of the ellipsoid and for the hyperbolic billiard inscribed in $e''$.

\medskip
According to \cite[Theorem~2]{Sta_II}, a canonical parametrization can be expressed in terms of Jacobian elliptic functions to the modulus $d/a_c\,$.
After replacing $u$ by $\wt u:= a_c' u\,$, the three Jacobian elliptic base functions (see, e.g., \cite{Hoppe}) are  
\[  \SN\wt u = -\cos t, \quad \CN\wt u = \sin t,\quad
    \DN\wt u = \sqrt{1 - m^2\SN\wt u}\, 
%      = \frac 1{a_c}\,\sqrt{a_c^2 - (a_c^2 - b_c^2)\cos^2 t} 
     = \frac 1{a_c'}\,\sqrt{a_c\strqu\sin^2 t + b_c\strqu\cos^2 t}\,. 
\] 
The canonical parametrization of $e'$, namely $(-a_e\SN\wt u,\ b_e\CN\wt u)$, and the formula $k_2(\wt u) = k_h(\wt u) = - a_c\strqu \DN^2\wt u$ \,in \cite[(4.11)]{Sta_II} yield, due to \eqref{eq:Bahn_Raumbil}, the canonical parametrization
\begin{equation}\label{eq:e(u)}
  \Vkt e(\wt u) = \left( x(\wt u),y(\wt u),z(\wt u)\right) = \left(
   -\Frac{a_e \sqrt{a_c\strqu + k_1}}{a_c'}\SN\wt u, \ 
    \Frac{b_e \sqrt{b_c\strqu + k_1}}{b_c'}\CN\wt u, \
   \pm\Frac{\sqrt{-k_0 k_1}}{b_c'}\DN\wt u \right) 
\end{equation}
of the trajectory $e$ with the constant elliptic coordinates $k_0$ and $k_1\,$.
For the hyperbolic billiard as limit with $k_1 = -b_c\strqu$ follows by \eqref{eq:param hyp_bil}
\begin{equation}\label{eq:e"(u)}
  x''(\wt u) = \sqrt{\smFrac{a_c\strqu - b_c\strqu}{a_c\strqu + k_1}}\,x(\wt u)
          = -\frac{a_e d'}{a_c'}\,\SN\wt u\,, \zwi y'' = 0\,,\zwi
    z''(\wt u) = \frac{b_c'}{\sqrt{-k_1}}\,z(\wt u)  
         = \pm \sqrt{k_0}\,\DN\wt u\,.
\end{equation}

Each confocal ellipse $e'$ in the exterior of $c'$ is the trajectory of a billiard with the caustic $c'$.
By virtue of \cite[Theorem~2]{Sta_II}, the corresponding shift $\varDelta\wt u$ satisfies
\[ a_e = \frac{a_c'\Dn(\varDelta\wt u)}{\Cn(\varDelta\wt u)}\quad\mbox{and}\quad 
    b_e = \frac{b_c'}{\Cn(\varDelta\wt u)}\,.
\]
The conditions for a periodic billiard in $e'$ are given in \eqref{eq:rational}.

As explained in \cite[Corollary~3]{Sta_II}, it makes sense to replace $k_e = k_0$ as a coordinate for the confocal ellipses $e'$ by $\wt v:= \varDelta\wt u\,$, since it is canonical, too.
Note that
\[  k_0(\wt v) = a_e^2 - a_c\strqu = \frac{a_c^2\,\mathrm{sn}^2\wt v}
    {\mathrm{cn}^2\wt v}(1 - m^2).
\]     
By use of \Lemref{lem:corr_points}, we transfer the parametrization of the planar Poncelet grid by $(\wt u,\,\wt v)$ to the Poncelet grid on the one-sheeted hyperboloid $\mathcal H_1$ with the elliptic coordinate $k_1$ (Figures~\ref{fig:bil_mit_Hyp} and \ref{fig:bil_mit_Hyp2}).
Thus, we obtain

\begin{figure}[htb]
  \centering
  \def\sz{\small} 
  \psfrag{P1}[lb]{\contourlength{1.2pt}\contour{white}{\sz\red $P_1$}}
  \psfrag{P2}[rt]{\contourlength{1.2pt}\contour{white}{\sz\red $P_2$}}
  \psfrag{P3}[lb]{\contourlength{1.2pt}\contour{white}{\sz\red $P_3$}}
  \psfrag{P4}[rt]{\contourlength{1.2pt}\contour{white}{\sz\red $P_4$}}
  \psfrag{P8}[rt]{\contourlength{1.2pt}\contour{white}{\sz\red $P_8$}}
  \psfrag{P9}[lb]{\contourlength{1.2pt}\contour{white}{\sz\red $P_9$}}
  \psfrag{Q7,14}[lc]{\contourlength{1.2pt}\contour{white}{\sz\blue $Q_{7,14}$}}
  \psfrag{Q1,8}[lc]{\contourlength{1.2pt}\contour{white}{\sz\blue $Q_{1,8}$}}
  \psfrag{Q2,9}[lc]{\contourlength{1.2pt}\contour{white}{\sz\blue $Q_{2,9}$}}
  \psfrag{S8^2}[lc]{\contourlength{1.2pt}\contour{white}{\sz\green $S_8^{(2)}$}}
  \psfrag{S2^2}[lc]{\contourlength{1.2pt}\contour{white}{\sz\green $S_2^{(2)}$}}
  \psfrag{S10^2}[lc]{\contourlength{1.2pt}\contour{white}{\sz\green $S_{10}^{(2)}$}}
  \psfrag{S1^2}[lc]{\contourlength{1.2pt}\contour{white}{\sz\green $S_1^{(2)}$}}
  \psfrag{S9^2}[lc]{\contourlength{1.2pt}\contour{white}{\sz\green $S_9^{(2)}$}}
  \psfrag{S8^4}[rt]{\contourlength{1.2pt}\contour{white}{\sz\green $S_8^{(4)}$}}
  \psfrag{S2^4}[rt]{\contourlength{1.2pt}\contour{white}{\sz\green $S_2^{(4)}$}}
  \psfrag{S1^4}[rc]{\contourlength{1.2pt}\contour{white}{\sz\green $S_1^{(4)}$}}
  \psfrag{S9^4}[rc]{\contourlength{1.2pt}\contour{white}{\sz\green $S_9^{(4)}$}}
  \psfrag{T1}[rc]{\sz\blue $T_1$}
  \psfrag{T6}[rc]{\sz\blue $T_6$}
  \psfrag{T7}[rc]{\sz\blue $T_7$}
  \psfrag{T11}[lc]{\contourlength{1.2pt}\contour{white}{\sz\blue $T_{11}$}}
  \psfrag{T10}[lc]{\contourlength{1.2pt}\contour{white}{\sz\blue $T_{10}$}}
  \psfrag{T3}[lc]{\contourlength{1.2pt}\contour{white}{\sz\blue $T_3$}}
  \psfrag{u}[rt]{\sz $\wt u$}
  \psfrag{v}[rt]{\sz $\wt v$}
  \psfrag{-K}[rb]{\sz $-K$}
  \psfrag{3K}[lb]{\sz $3K$}
  \psfrag{c}[rc]{\sz\blue $\boldsymbol{c}$}
  \psfrag{e}[lc]{\sz\red $\boldsymbol{e}$}
  \psfrag{e2}[lc]{\sz\red $\boldsymbol{e}^{(2)}$}
  \psfrag{e4}[lc]{\sz\green $\boldsymbol{e}^{(4)}$}
  \includegraphics[width=120mm]{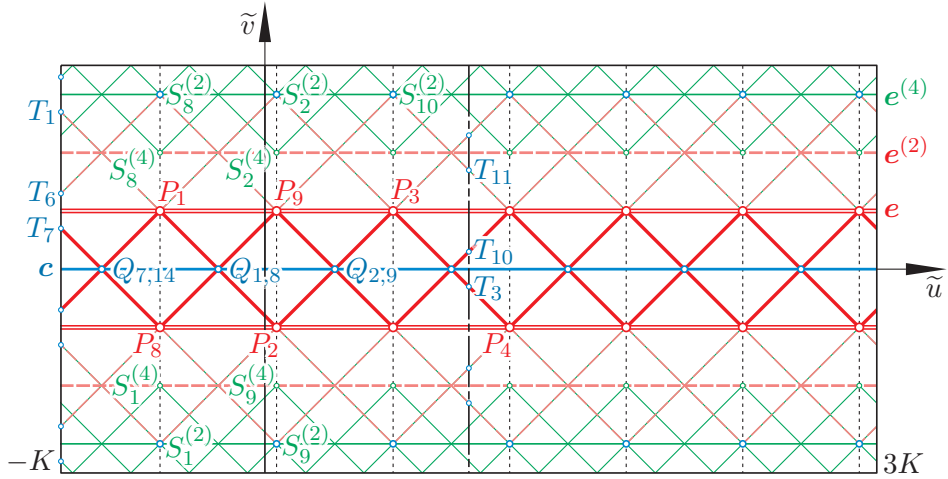} % CAD_2D grid_c
  \caption{The injective mapping $\Vkt Z$ sends the square grid of points $Q_i, P_i$ and $S_i^{j}$, $i=1,2\dots,14\,$, $j=1,2,4\,$, to the vertices and the diagonals to the net of curvature lines of the Poncelet grid associated to a $N$-periodic focal billiard on a one-sheeted hyperboloid with $N=14$ and turning number $\tau=2\,$.}
  \label{fig:map}
\end{figure}    

\begin{thm} % Thm 13
Referring to the notation in \Thmref{thm:isometric2}, the injective mapping
\def\arraycolsep{1mm}
\[ \begin{array}{rl}
    \Vkt Z\!: &U\times V \to \mathcal H_1, \quad 
    (\wt u,\,\wt v) \,\mapsto\, \left(-a_{h_1}\Frac{\SN\wt u\,\DN\wt v}{\CN\wt v}
     \,, \ 
     b_{h_1}\Frac{\CN\wt u}{\CN\wt v}\,, \ 
     \pm\,c_{h_1}\Frac{\DN\wt u\,\SN\wt v}{\CN\wt v}\right)
    \\[3.0mm]
    &\mathrm{for}\zwi U:= \{\wt u \ |\,-K\le \wt u < 3K\},\zwi
       V:= \{\wt v \ |\,-K\le \wt v < K\} 
   \end{array} 
\]
parametrizes the one-sheetet hyperboloid $\mathcal H_1$ with semiaxes $(a_{h_1}, b_{h_1}, c_{h_1})$ % the elliptic coordinate $k_1$ 
in such a way, that the lines $\wt u = \const$ are branches of the lines of curvature with $k_2 = \const\,$.
Along $\wt v = \const$ we have $k_0 = \const$, and $\wt u \pm \wt v = \const$ defines generators of $\mathcal H_1\,$.\end{thm}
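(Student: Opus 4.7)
The plan is to read off $\Vkt Z$ from the canonical parametrization \eqref{eq:e(u)} of the line of curvature $e = \mathcal E \cap \mathcal H_1$ by replacing the ellipsoid semiaxes by their expressions in the canonical coordinate $\wt v$. Using $a_e = a_c'\Dn\wt v/\Cn\wt v$ and $b_e = b_c'/\Cn\wt v$ (recalled just before the theorem), together with $k_0(\wt v) = b_c\strqu\,\SN^2\wt v/\CN^2\wt v$, so that $\sqrt{-k_0\,k_1} = c_{h_1}\,b_c'\,\Sn\wt v/\Cn\wt v$, one substitutes into each component of $\Vkt e(\wt u)$; the factors $a_c'$ and $b_c'$ cancel and exactly the three stated formulas for $\Vkt Z(\wt u,\wt v)$ fall out.

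To verify the claims about the elliptic coordinates of $\Vkt Z$, I would compute them directly via \eqref{eq:elliptic-cartesian}. The identities
\[  a_c\strqu + k_0(\wt v) = a_c\strqu\,\DN^2\wt v/\CN^2\wt v,\quad b_c\strqu + k_0(\wt v) = b_c\strqu/\CN^2\wt v,\]
\[  a_c\strqu + k_2(\wt u) = (a_c\strqu - b_c\strqu)\,\SN^2\wt u,\quad b_c\strqu + k_2(\wt u) = -(a_c\strqu - b_c\strqu)\,\CN^2\wt u,\]
combined with $a_c\strqu + k_1 = a_{h_1}^2$, $b_c\strqu + k_1 = b_{h_1}^2$, and $-k_1 = c_{h_1}^2$, factor the products in \eqref{eq:elliptic-cartesian} cleanly and reproduce $x^2, y^2, z^2$ of $\Vkt Z(\wt u,\wt v)$ when the elliptic coordinates are taken as $(k_0(\wt v),\,k_1,\,k_2(\wt u))$ with $k_2(\wt u) = -(a_c\strqu\CN^2\wt u + b_c\strqu\SN^2\wt u)$ from \eqref{eq:k0,k2}. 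Since the middle coordinate is the constant $k_1$, the image of $\Vkt Z$ lies on $\mathcal H_1$; and since confocal quadrics of different types cut each other along lines of curvature w.r.t.\ both, the coordinate curves $\wt u = \const$ and $\wt v = \const$ are precisely the two families of lines of curvature on $\mathcal H_1$, carrying constant $k_2$ and $k_0$ respectively.

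For the generator statement I would invoke \eqref{eq:Erzeugende}: at any point of $\mathcal H_1$, the two generators have direction $\sqrt{k_1-k_2}\,\Vkt n_2/\Vert \Vkt n_2\Vert \pm \sqrt{k_0-k_1}\,\Vkt n_0/\Vert \Vkt n_0\Vert$. Now $\partial_{\wt u}\Vkt Z$ is tangent to the line of curvature $\wt v = \const$ and hence parallel to $\Vkt n_2$ (compare the identity $\Vkt t_e \parallel \Vkt n_2$ noted around \eqref{eq:|te|Raum}); dually $\partial_{\wt v}\Vkt Z$ is parallel to $\Vkt n_0$. Thus the tangent $\partial_{\wt u}\Vkt Z\pm\partial_{\wt v}\Vkt Z$ to the curve $\wt u \mp \wt v = \const$ is a linear combination of $\Vkt n_2$ and $\Vkt n_0$, and explicit evaluation of the norms $\Vert \partial_{\wt u}\Vkt Z\Vert$ and $\Vert \partial_{\wt v}\Vkt Z\Vert$ (a computation directly parallel to the derivation of \eqref{eq:|te|Raum}) shows that these norms stand in the ratio $\sqrt{k_1-k_2}:\sqrt{k_0-k_1}$ demanded by \eqref{eq:Erzeugende}; the two sign choices produce the two reguli. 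Injectivity on $U\times V$ then follows because the assignment $(\wt u,\wt v)\mapsto(k_2(\wt u),k_0(\wt v))$ is a bijection on this fundamental domain of $\Sn,\Cn,\Dn$, while the $\pm$ in the $z$-component distinguishes the two branches of each line of curvature separated by the gorge ellipse. The main obstacle lies in the generator step, where the scalar factors and signs must match \eqref{eq:Erzeugende} precisely; as a fall-back one may verify linearity of $\Vkt Z(\wt u,\wt v)$ along $\wt u \pm \wt v = \const$ directly from the Jacobian addition formulas for $\Sn,\Cn,\Dn$.
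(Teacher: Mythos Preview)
Your derivation of the formula for $\Vkt Z$ by substituting $a_e=a_c'\Dn\wt v/\Cn\wt v$, $b_e=b_c'/\Cn\wt v$ and $k_0(\wt v)$ into \eqref{eq:e(u)} is exactly how the paper arrives at the theorem; the paper presents the result as an immediate consequence of the preceding discussion and states no separate proof.  The only place where you diverge is the generator claim: the paper relies on the transfer via \Lemref{lem:corr_points} from the planar Poncelet grid (where $\wt u\pm\wt v=\const$ are tangents of $c'$, as established in \cite{Sta_II}) to generators of $\mathcal H_1$, whereas you argue intrinsically by matching $\partial_{\wt u}\Vkt Z\pm\partial_{\wt v}\Vkt Z$ against the generator directions \eqref{eq:Erzeugende}.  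Your route is self-contained and avoids quoting the planar result, at the price of a norm computation; the paper's route is shorter but leans on the external reference.  One small slip: your injectivity argument via $(\wt u,\wt v)\mapsto(k_2(\wt u),k_0(\wt v))$ cannot work as stated, since $k_2(\wt u)=-a_c\strqu\Dn^2\wt u$ is four-to-one on $U$; use instead that $(\Sn\wt u,\Cn\wt u)$ and $(\Sn\wt v,\Cn\wt v)$ are injective on $U$ and $V$ respectively.
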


The mapping $\Vkt Z$ sends a square grid, for example that displayed in \Figref{fig:map}, to the Poncelet grid of a periodic focal billiard on $\mathcal H_1\,$.
The domain of the mapping $\Vkt Z$ can be extended to $\RR^2$ and satisfies
\[ \Vkt Z((\wt u + 4K),\,\wt v) = \Vkt Z(\wt u,\,(\wt v + 2K))
   = \Vkt Z(\wt u,\,\wt v). 
\]

At the limit $k_1 = 0$ we obtain the elliptic billiard in the plane of the focal ellipse $c'$; in this case $\Vkt Z$ is two-to-one, since $\Vkt Z(\wt u,\,-\wt v) = \Vkt Z(\wt u,\,\wt v)$.
At the other limit $k_1 = -b_c'$ holds $\Vkt Z(K +\wt u,\,\wt v) = \Vkt Z(K - \wt u,\,\wt v)$.
This corresponds to the fact, that the flexible Henrici hyperboloid is two-fold covered in its flat limiting poses (see \cite[Figure~2.51]{Quadrics}).

% --------------------------------------------------------------------------------
\section{Conclusion}
%       ------------
We reported about a remarkable relation between billiards in ellipses and Henrici's flexible hyperboloid:
There is a continuous transition between elliptic and hyperbolic billiards together with the respectively associated Poncelet grids via focal billiards of an triaxial ellipsoid.
This transition preserves not only the side lengths of the billiards but also the distances to the vertices of the Poncelet grids.
In this way, invariants of an elliptic billiard under its billiard motion can be transferred to invariants of focal billiards and a hyperbolic billiard.
A certain symmetry between the intersection points of the spatial billiard with the planes of the corresponding focal conics and related invariants will be subject of a future publication.

% ------------------------------------------------------------------------
\section*{Acknowledgment}
%        ----------------
This paper was inspired by Dan Reznik's computer experiments and his flair for geometric specialities.
The author is very grateful to him and to Ronaldo Garcia for interesting and fruitful discussions.

% ----------------------------------------------------------------------------

\end{JGGarticle}
\end{document}